\theoremstyle{plain}
\newcommand\pauli[1]{\sigma^z_{#1}}
\tikzstyle{arrowblue}=[-,line width=1pt,blue]
\tikzstyle{arrowred}=[-,line width=1pt,red, dashed]
\tikzstyle{nodo}=[draw=black, line width=1pt, ellipse]
\tikzset{splrect/.style = {rectangle split, rectangle split horizontal,
                           rectangle split parts=5, minimum height=1cm, 
                           align=center, draw=black},
         rect/.style = {rectangle, rounded corners, minimum width=3cm,
                        minimum height=1cm,text centered, text width=3cm,
                        draw=black},
         rect2/.style = {rectangle, rounded corners, minimum width=4cm,
                        minimum height=1cm,text centered, text width=4cm,
                        draw=black},
         redrect/.style = {rectangle, rounded corners, minimum width=3cm,
                           minimum height=1cm,text centered, text width=3cm,
                           draw=black, fill=red, fill opacity=0.5},
         freerect/.style = {rectangle, rounded corners, minimum width=10cm, minimum height=1cm, draw=black, text centered, text width=10cm},
         arrow/.style = {->,shorten >=1pt,>=stealth',semithick},
         labl/.style = {minimum width=3cm,
                        minimum height=1cm,text centered, text width=3cm,
                        draw=black!0},
         doc/.style={draw, minimum height=4em, minimum width=3em, text width=3cm, text centered, fill=white, double copy shadow={shadow xshift=4pt, shadow yshift=4pt, fill=white, draw}},
         decision/.style={diamond, minimum width=3cm, minimum height=1cm, text centered, draw=black}
}
\tikzset{
  invisible/.style={opacity=0},
  visible on/.style={alt={#1{}{invisible}}},
  alt/.code args={<#1>#2#3}{%
    \alt<#1>{\pgfkeysalso{#2}}{\pgfkeysalso{#3}} 
  },
}
\newcommand{\eclause}{{\unitlength 2.3mm \,\framebox(1,1){}\,}}
\newcommand{\wcla}[2]{_{(#1)}\,#2}
\newcommand{\xor}{\oplus}
\newcommand{\ignorar}[1]{}
\newcommand{\opt}{Opt}
\newcommand{\cost}{Cost}
\newcommand{\weight}{Weight}
\newtheorem{example}{Example}
\newtheorem{theorem}{Theorem}
\newtheorem{lemma}[theorem]{Lemma}
\newtheorem{corollary}[theorem]{Corollary}
\newtheorem{definition}[theorem]{Definition}
\theoremstyle{definition}
\theoremstyle{remark}
\newtheorem*{note*}{Note}
\newtheorem{remark}[theorem]{Remark}
\newtheorem*{remark*}{Remark}
\newtheorem*{claim*}{Claim}
\title{SAT, Gadgets, Max2XOR\\and Quantum Annealers}
\author{Carlos Ansotegui\\
Logic \& Optimization Group,\\ Universitat de Lleida, Spain
\and
Jordi Levy\\
IIIA, CSIC, Spain}
\begin{document}
\maketitle

\begin{abstract}
Quantum Annealers are presented as quantum computers that with high probability can optimize certain quadratic functions on Boolean variables in constant time. These functions are basically the Hamiltonian of Ising models that reach the ground energy state, with a high probability, after an annealing process. They have been proposed as a way to solve SAT in some preliminary works.

These Hamiltonians can be seen as Max2XOR problems, i.e. as the problem of finding an assignment that maximizes the number of XOR clauses of at most 2 variables that are satisfied. In this paper, we focus on introducing several gadgets to reduce SAT to Max2XOR. We show how they can be used to translate SAT instances to initial configurations of a quantum annealer.
\end{abstract}


\textbf{Keywords:}
Maximum Satisfiability; Quantum Annealers; Satisfiability





\section{Introduction}
\subsection*{Quantum Annealers}

Quantum Computation exploits some particularities of quantum mechanics, such as superposition, interference, and entanglement to solve some hard problems such as the ones we face in Artificial Intelligence. Current quantum computers are still too small to compete with classical computers, but the rapid increase in their number of qubits could replace them in some of these hard problems. Most research in quantum computing is focused on the model of quantum circuits, where gates are quantum operators that act on a qubit or a superposition of them. 

An alternative and equivalent~\cite{equivalentquantum} model is \emph{Quantum Annealing}. D-Wave Systems Inc.\footnote{\url{https://www.dwavesys.com/}.} has constructed some computers based on this model of computation with more than 5000 qubits, surpassing (in the number of qubits) the rest of the manufacturers of quantum computers. In these machines, we can only decide the \emph{coupling factor} $J_{ij}$ between certain pairs of qubits $i$ and $j$, and the \emph{biases} $h_i$ of each qubit $i$. 
From an initial situation described by the Hamiltonian $\mathcal{H} = \sum_i \sigma^x_i$, the system evolves into a system described by an Ising Hamiltonian 
\[
\mathcal{H}_{Ising} = \sum_{i\in V_G} h_i \pauli{i} + \sum_{(i,j)\in E_G} J_{ij}\,\pauli{i}\,\pauli{j}
\]
where $\pauli{i}$ are the z-Pauli operator acting on qubit $i$ and $G=(V_G,E_G)$ is a graph describing the architecture of the machine.
During the process, the system is tried to be kept in the lowest-energy eigenstate, thus at the end, ideally, we get the lowest-energy eigenstate of the Ising model that encodes a minimization problem. This final state is classical, without superpositions.
The probability of success depends on the difference of energy between the lowest-energy state and the next energy level. The Landau-Zener model describes the probability of success (finishing in the minimal energy state) for a two-states system with one $1/2$-spin particle with energy gap $\Delta$ under the action of a varying external magnetic field at rate $c$ as
\(
p = 1-e^{\frac{-\pi\,\Delta^2}{4c}}
\).
\citet{hard2SAT} argue that, although the quantum annealers are more complicated systems, the probability can be approximated by this two-level system.

In order to increase the probability of finding this minimum, the experiment is repeated many times.
The ranges of biases and couplings depend on the particular model of the quantum annealer.\footnote{For old Chimera-base D-Wave 2000 series, they were $h_i\in[-2,2]$ and $J_{ij}\in[-1,1]$. In more modern models they are $h_i\in[-4,4]$ and $J_{ij}\in[-2,1]$.} In this paper, for simplicity, we assume that all of them are in the range $h_i\in[-1,1]$ and $J_{ij}\in[-1,1]$, being our conclusions easily adapted for other ranges. The structure of the machine also depends on the model. For our purposes, we simply assume that $G$ is an undirected sparse graph. In other words, we will idealize quantum annealers as devices that are able to solve --in constant time and with high probability~\footnote{Here we use high probability in the mathematical sense, as a non-zero probability.}  that depends exponentially on the difference between the minimal solution and the next state-- a quadratic function where only some quadratic terms are allowed.

In this paper, we will show that (obviating some technical details) a quantum annealer can be seen as a Max2XOR solver with constraint weights in the reals $[0,1]$. Therefore, we focus on the problem of finding gadgets to reduce the satisfiability of a CNF formula (SAT) to the satisfiability of the \emph{maximum} number of 2XOR constraints (Max2XOR) and analyze which (theoretical) features of the gadgets are desirable for quantum annealers. In some sense, a subproduct of our contribution could be seen as one more step to assess the maturity and potential of quantum annealers to solve the SAT problem.

\subsection*{Max2XOR, QUBO and Ising model}

Exclusive OR (XOR), here written $\xor$, may be an alternative to the use of traditional OR to represent propositional formulas. In practice, many approaches combining SAT and XOR reasoning have been presented \cite{Li00,Li00prima,BaumgartnerM00,Li03,HeuleM04,HeuleDZM04,Chen09,SoosNC09,LaitinenJN12sat12,Soos10,LaitinenJN11,LaitinenJN12ictai12,SoosM19}. By writing clauses $x_1\xor \cdots \xor x_k$ as constraints $x_1\xor \cdots \xor x_k=1$ or $x_1\xor \cdots \xor x_k=0$, where $1$ means true and $0$ false, we can avoid the use of negation, because $\neg x\xor C=k$ is equivalent to $x\xor C=1-k$. The equivalent to the resolution rule for XOR constraints, called XOR resolution rule, is
\[
\begin{array}{c}
x\xor A=k_1\\
x\xor B=k_2\\
\hline
A\xor B = k_1\xor k_2
\end{array}
\]
where $A$ and $B$ are clauses. In the particular case of $A=B=\emptyset$ and $k_1\neq k_2$, this rule concludes the contradiction $0=1$, that we represent as $\eclause$. The proof system containing only this rule allows us to produce polynomial refutations for any unsatisfiable set of XOR constraints, using Gaussian elimination. Therefore, unless $P=NP$, we cannot polynomially translate any propositional formula into an equivalent conjunction of XOR constraints.\footnote{We can translate any OR clause $x_1\vee \dots\vee x_k$ into the set of weighted XOR constraints:
\(
\bigcup_{S\subseteq \{1,\dots,k\}} \{\wcla{1/2^{k-1}}{\bigoplus_{i\in S} x_i = 1}\}
\)
This translation allows us to reduce SAT to MaxXOR. However, the reduction is not polynomial, because it translates every clause of size $k$ into $2^k-1$ constraints. In Section~\ref{sec:MaxSAT->Max2XOR}, we describe a polynomial reduction that avoids this exponential explosion on expenses of introducing new variables.} 
However, the SAT problem can be reduced to the Max2XOR problem, i.e., the problem of maximizing the number of 2XOR constraints that can be satisfied simultaneously. 

Formally, a Max2XOR problem is a set of pairs \[\{\wcla{w_1}{C_1=k_1},\dots,\wcla{w_n}{C_n}= k_n\},\] where $w_i$ are positive rational numbers (we can also consider the special case of natural weights or the generalization to positive real weights) representing the penalty for violating the constraint, $C_i$ are clauses form by one variable or the XOR disjuntion (or sum modulo two) of two variables, and $k_i$ are either $0$ or $1$, representing $false$ and $true$, respectively. A solution is an optimal assignment that maximizes the sum of the weights of satisfied constraints. 
For example, $\{\wcla{1/2}{x=1},\ \wcla{1/2}{y=1},\ \wcla{1/2}{x+y=1}\}$ is a Max2XOR problem where the set of solutions (optimal assignments) are exactly the set of solutions of the SAT formula $\{x\vee y\}$, i.e. the assignments that assign $x$, or $y$, or both, to one. All these three optimal assignments satisfy two constraints with a total weight equal to one.

A Quadratic Unconstrained Binary Optimization (QUBO) problem is a minimization problem:
\[
\min_{x_i\in\{0,1\}} \sum_i a_i x_i + \sum_{i<j}b_{ij}x_i x_j
\]
Notice that any Max2XOR problem $\Gamma$ may be translated into an equivalent QUBO problem, taking:
\[
\begin{array}{l}
\displaystyle a_i = \sum_{\wcla{w}{x_i=0}\in \Gamma}\kern-4mm w \ - \sum_{\wcla{w}{x_i=1}\in \Gamma} \kern-4mm w\ +
\sum_{\wcla{w}{x_i+y=0}\in \Gamma} \kern-4mm w\ - \sum_{\wcla{w}{x_i+y=1}\in \Gamma} \kern-4mm w\\[8mm]
\displaystyle b_{ij} = -2\kern-6mm\sum_{\wcla{w}{x_i+x_j=0}\in \Gamma} \kern-4mm w \ +\  2 \kern-6mm\sum_{\wcla{w}{x_i+x_j=1}\in \Gamma} \kern-4mm w
\end{array}
\]
Conversely, any QUBO problem may be easily translated into a Max2XOR problem (where all weights $w$ are positive, whereas $a_i$ and $b_{ij}$ may be negative).

Any QUBO problem can be translated into an Ising formulation
\[
\min_{z_i\in\{-1,+1\}}\sum_i h_i z_i + \sum_{i<j} J_{ij} z_i z_j
\]
via the bijection $h_i = \frac{1}{2}a_i +\frac{1}{4}\sum_j b_{ij}$, $J_{ij}=\frac{1}{4}b_{ij}$, and similarly for any Max2XOR problem $\Gamma$ via $h_i = \frac{1}{2}(\sum_{\wcla{w}{x_i=0}\in \Gamma} w \ - \sum_{\wcla{w}{x_i=1}\in \Gamma} w)$ and $J_{ij}=\frac{1}{2}(\sum_{\wcla{w}{x_i+x_j=1}\in \Gamma} w \ -\  \sum_{\wcla{w}{x_i+x_j=0}\in \Gamma}w)$.
Therefore, Max2XOR, QUBO, and Ising model annealing are three NP-complete equivalent problems. 
For example,
\begin{itemize}
    \item 
the Max2XOR problem $\{\wcla{1}{x_1=0},\ \wcla{1}{x_1+x_2=0}\}$, 
\item the QUBO problem $\min_{x_1,x_2\in\{0,1\}}2x_1+x_2-2x_1x_2$ and 
\item the Ising problem $\min_{z_1,z_2\in\{-1,+1\}}\frac{1}{2}z_1 -\frac{1}{2}z_1z_2$ 
\end{itemize}
are all of them equivalent. In all these problems we can multiply all weights/coefficients by the same constant, obtaining an equivalent problem. However, we define the transformations to preserve the \emph{energy gap} between the lowest-energy state and the next classical state.

We can see quantum annealers as hardware to try to some this problem in any of these three presentations.
In this paper, to formalize the reduction from SAT to Max2XOR (or equivalently to QUBO or Ising) we will use gadgets.

\ignorar{
We will consider weighted constraints, noted $\wcla{w}{C=k}$, where $w$ is a rational number that denotes the contribution of the satisfiability of the constraint to the formula. This way, we can translate every binary clause $x\vee y$ as $\{\wcla{1/2}{x=1},\ \wcla{1/2}{y=1},\ \wcla{1/2}{x\xor y=1}\}$, because when $x$ or $y$ are equal to one (i.e. $x\vee  y$ is satisfied), exactly two of the XOR constraints are satisfied, which contributes $1/2+1/2=1$ to the constraint, and when $x$ and $y$ are both equal to zero and the original clause is falsified, none of the XOR constraints are satisfied. We can also translate ternary clauses like $x\vee y\vee z$ as $\{\wcla{1/4}{x=1},$ $\wcla{1/4}{y=1},$ $\wcla{1/4}{z=1},$ $\wcla{1/4}{x\xor y=1},$ $\wcla{1/4}{x\xor z=1},$ $\wcla{1/4}{y\xor z=1},$ $\wcla{1/4}{x\xor y\xor z=1}\}$. 

In Section~\ref{sec:OnProofSystems}, we will discuss the possible definition of a proof system for Max2XOR, in the spirit of the MaxSAT resolution which was first defined by~\cite{LarrosaH05}, and proven complete by~\cite{SAT06,AIJ1}.
}

\subsection*{Gadgets}
Traditionally, the word "gadget" is used to denote a finite combinatorial structure that allows translating constraints of one
problem to constraints of another.
In~\cite{gadgets}, the notion is formalized, defining a $(\alpha,\beta)$-\emph{gadget} as a function from a family of constraints $\mathcal{F}_1$ to another family $\mathcal{F}_2$ that translate every constraint $f(\vec{x})$ in $\mathcal{F}_1$ to a set of $\beta$ constraints $\{g_i(\vec{x},\vec{b})\}_{i=1,\dots,\beta}$ in $\mathcal{F}_2$, where the $b$'s are (fresh) \emph{auxiliary variables} such that, when $f(\vec{x})$ is satisfied for some assignment of the $x$'s, we can find an assignment to the $b$'s that satisfy $\alpha$ constraints $g(\vec{x},\vec{b})$. Conversely, when $f(\vec{x})$ is falsified, no assignment for the $b$'s satisfy strictly more than $\alpha-1$ constraints $g(\vec{x},\vec{b})$.
If, additionally, when $f(\vec{x})$ is falsified, some assignment for the $b$'s satisfy exactly $\alpha-1$ constraints $g(\vec{x},\vec{b})$, we say that the gadget is \emph{strict}.
This definition can be generalized to \emph{weighted} constraints, where the weights are real numbers.\footnote{Although a more detailed analysis would show that we can restrict them to be rational numbers.} Then, instead of the number of satisfied constraints, we talk about the sum of the weights of satisfied constraints.

There is a long tradition of the use of gadgets. For instance, the $(k-2,k-2)$-gadget:
\begin{equation}
x_1\vee\cdots\vee x_k \to \left\{ \begin{array}{l}
x_1\vee x_2\vee b_1,\\
\neg b_1\vee x_3\vee b_2,\\
\cdots,\\
\neg b_{k-4}\vee x_{k-2}\vee b_{k-3},\\
\neg b_{k-3}\vee x_{k-1}\vee x_k
\end{array}\right.
\label{eq-MaxSAT->Max3SAT}
\end{equation}
reduces $k$SAT to 3SAT.
The following (weighted) $(3.5,4)$-gadget~\cite{gadgets} reduces 3SAT to Max2SAT:
\begin{equation}
x_1\vee x_2\vee x_3 \to \left\{
\begin{array}{ll}
\wcla{1/2}{x_1\vee x_3}, 
\ \wcla{1/2}{\neg x_1\vee \neg x_3},\\
\wcla{1/2}{x_1\vee \neg b},
\ \wcla{1/2}{\neg x_1\vee b},\\
\wcla{1/2}{x_3\vee \neg b},
\ \wcla{1/2}{\neg x_3\vee b},\\
\wcla{1}{x_2\vee b}
\end{array}\right.
\label{eq-Max3SAT->Max2SAT}
\end{equation}

In \cite{AnsoteguiL21} new gadgets reducing SAT to Max2SAT are introduced. The authors show, formally and empirically, that the new reduction allows solving efficiently the Pigeon Hole Principle, i.e. they prove that there exists a polynomial MaxSAT resolution proof~\cite{AIJ1}  and that a general purpose MaxSAT solvers can solve the resulting Max2SAT formula.

Defining $\opt(P)$ as the maximal number of satisfied constraints of a problem $P$, $\cost(P)$ as the minimal number of falsified constraints, and $\weight(P)$ the number of constraints in $P$, we have that,
if $P$ is translated into $P'$ using a $(\alpha,\beta)$-gadget, then
\[
\begin{array}{l}
\opt(P')\leq(\alpha-1)\weight(P)+\opt(P)\\
\cost(P')\geq(\beta-\alpha)\weight(P)+\cost(P)
\end{array}
\]
with equalities if the gadget is strict.
This allows us to obtain an algorithm to maximize $P$, using an algorithm to maximize $P'$.

When $\alpha=\beta$ we preserve the cost, hence the satisfiability. Then, given a decision algorithm for $P'$, we get a decision algorithm for $P$.

Moreover, when we have a $p$-approximation algorithm for $P'$ we can get an $(1-\alpha(1-p))$-approximation algorithm for $P$. Therefore, traditionally, we were interested in gadgets with minimal~$\alpha$ in order to minimize the error in the approximation.

If the algorithm is based on deriving empty clauses, i.e. that prove lower bounds for the cost, we will be interested in gadgets that minimize $\beta-\alpha$. Notice that in the case that $P$ is a decision problem, then we can say that $P$ is unsatisfiable if, and only if, $\cost(P') \geq (\beta-\alpha)\weight(P)+1$.

In this paper, we are interested in maximizing another feature of gadgets, that we will call \emph{energy gap}, and will be introduced later. We will also discuss on the number of auxiliary variables the gadget introduces and how \emph{flexible} is the structure itself of the gadget.

\subsection*{Related Work}

Santra et~al.~\cite{quantumMax2SAT} experimentally study the performance of one of the first quantum annealers, with 108 qubits, on random Max2SAT problems. Despite the small size of tested instances, they observe that the probability of obtaining a correct answer decreases with the clause density (when we increase the number of clauses while keeping the number of variables fixed). This is explained by the decrease in the energy gap between the ground state and the first excited state. They find that success probability decreases around the critical clause density. However, it is not correlated with the time required by classical MaxSAT solvers (they compare \emph{akmaxsat}) to find a solution. 
More in detail, they identify every Boolean variable $x_j$ with a qubit $j$, the value \emph{true} with the eigenstate $1$ of the Pauli spin operator $\pauli{j}$ acting on qubit $j$, and \emph{false} with the eigenstate $-1$, i.e. $\pauli{j}\ \Ket{x_j\!=\!\mbox{true}} = \Ket{x_j\!=\!\mbox{true}}$ and $\pauli{j}\ \Ket{x_j\!=\!\mbox{false}} = -\Ket{x_j\!=\!\mbox{false}}$.
To avoid the problem with not connectivity between all pairs of qubits, they only consider Max2SAT formulas where clauses only contain pairs of variables whose corresponding qubits can be coupled. Using the Hamiltonian
\(
H = \sum_{(s\cdot x_i\vee s'\cdot x_j)\in\Gamma} \frac{\mathbb{I}- s\cdot\pauli{i}}{2}\frac{\mathbb{I}-s'\cdot\pauli{j}}{2}
\),
where $s,s'\in\{1,-1\}$ represent the sign of the variable in the clause, we get an energy penalty of $1$ for every clause violated by an assignment (the energy in an observed state is equal to the number of clauses violated by the assignment it represents). This Hamiltonian can be encoded with the biases $h_i = -1/4\sum_{(s\cdot x_i\vee s'\cdot x_j)\in\Gamma} s$ and couplings $J_{ij} = 1/4\sum_{(s\cdot x_i\vee s'\cdot x_j)\in\Gamma} s\cdot s'$. To ensure that all of them are in the range $[-1,1]$, we have to re-scale the Hamiltonian, multiplying by $1/ \max\{\max_i h_i,\max_{i,j} J_{ij}\}$. In the best case, when the density of clauses is low, this factor is $4$ which results in an energy penalty of $4$ for every violated clause. But, for high clause densities, the factor is smaller which results in a lower precision in the method.

Chancellor et al.~\cite{Chancellor} study the translation of kSAT and parity problems to quantum annealing. They observe that $x_1\vee x_2\vee x_3$ can be reduced to $x_1+x_2+x_3-x_1x_2-x_1x_3-x_2x_3+x_1x_2x_3$, or in general, $x_1\vee\cdots\vee x_k$ to $1+\sum_{S\subseteq \{1,\dots,k\}} (-1)^{|S|+1}\prod_{i\in S} x_i$. There is no problem with linear and quadratic terms, representing variables as qubits. For terms of bigger size, they use a new qubit to encode their value. However, this method has two disadvantages, one, generate couplings between all pairs of qubits, and second, the number of additional qubits grow exponentially with $k$.

N\"u\ss lein et al.~\cite{Nusslein} evaluate the performance of some methods to solve/translate 3SAT problems with/to quantum annealing. They also propose a new reduction and conclude that their method is better than Chancellor et al.~\cite{Chancellor}, and also than Choi~\cite{Choi}. However, they do not compare with Bian et al.~\cite{quantumSebastianiConference,quantumSebastianiJournal} and the new gadgets presented in this paper.

Douglass et al.~\cite{SATfilters} explores the use of a quantum annealer for the construction of SAT filters. In this case, the encoding involves the translation of MaxCUT, Not-all-equal 3-SAT,
2-in-4-SAT into a Hamiltonian. This translation is almost direct. The authors already remarked that the translation of SAT would involve the use of auxiliary variables, which are called \emph{ancillary variables}.

Bian et al.~\cite{quantumSebastianiConference,quantumSebastianiJournal} already using a 2048 qubits quantum annealer, explore the feasibility of solving SAT, without limitation on the size of clauses. The basic idea is to decompose any clause (or other subformulas) into constraints of at most 3 variables using the traditional Tseitin encoding. Like in the case of gadgets, this requires the introduction of auxiliary variables. For instance, we can decompose $x_1\vee x_2\vee x_3\vee x_4$ as $\{b_1\leftrightarrow x_1\vee x_2, b_2\leftrightarrow x_3\vee x_4, b_1\vee b_2\}$. Then, each one of these constraints or subformulas contributes to the Hamiltonian. For instance, $b_1\leftrightarrow x_1\vee x_2$ contributes by adding $5/2+1/2x_1+1/2x_2-b_1+1/2x_1x_2-x_1b_1-x_2b_1$\footnote{Abusing from the notation, we identify the variable (that takes Boolean values $0$ or $1$) with the qubit representing it (that takes values +1 or -1 when measured).} to the Hamiltonian. Below, we will see that this decomposition can in fact be interpreted as a gadget, although the authors do not mention the notion of a gadget in their paper. To ensure that the coupling factors are in the range $[-1,1]$, in some cases they have to re-normalize the entire Hamiltonian, dividing all factors by a constant. Moreover, they have to use several qubits to represent the same variable, when there are several occurrences of a variable in a formula. Then, they have to ensure that there is a path of coupled qubits that connect all these occurrences. They present some heuristics to find these paths.

Choi~\cite{minimalEmbedding} analyzes the problem of allocating variables into qubits and models it as a variant of graph embedding where in the process some transformations are allowed, basically the contractions of paths into edges. Bian et~al.~\cite{discreteOptimization} also analyze some methods with the same purpose.

\section{Preliminaries}

A k-ary \emph{constraint function} is a Boolean function $f:\{0,1\}^k\to\{0,1\}$. 
A \emph{constraint family} is a set $\mathcal F$ of constraint functions (with possibly distinct arities). 
A \emph{constraint}, over variables $V=\{x_1,\dots,x_n\}$ and constraint family $\mathcal F$, is a pair formed by a k-ary constraint function $f\in\mathcal{F}$ and a subset of $k$ variables, noted $f(x_{i_1},\dots,x_{i_k})$, or $f(\vec{x})$ for simplicity.  
A \emph{(weighted) constraint problem} or \emph{(weighted) formula} $P$, over variables $V$ and constraint family $\mathcal F$, is a set of pairs (weight, constraint) over $V$ and $\mathcal{F}$, where the weight is a positive rational number, denoted $P=\{\wcla{w_1}{f_1(x^1_{i_1},\dots,x^1_{i_{k_1}})},\dots,\wcla{w_m}{f_m(x^m_{i_1},\dots,x^m_{i_{k_m}})}\}$. The \emph{weight} of a problem is $\weight(P)=w_1+\cdots+w_m$.

An \emph{assignment} is a function $I:\{x_1,\dots,x_n\}\to\{0,1\}$. We say that an assignment $I$ \emph{satisfies} a constraint $f(x_{i_1},\dots,x_{i_k})$, if $I(f(\vec{x}))=_{\mbox{def}}f(I(x_{i_1}),\dots,I(x_{i_k}))=1$. 
The value of an assignment $I$ for a constraint problem $P=\{\wcla{w_i}{f_i(\vec{x})}\}_{i=1,\dots,m}$, is the sum of the weights of the constraints that this assignment satisfies, i.e. $I(P) = \sum_{i=1}^m w_i\,I(f_i(\vec{x}))$.
We also define the sum of weights of unsatisfied clauses as $\overline I(P) = \sum_{i=1}^m w_i\,(1-I(f_i(\vec{x})))$.
An assignment $I$ is said to be \emph{optimal} for a constraint problem $P$, if it maximizes $I(P)$. We define this optimum as $\opt(P)=\max_{I} I(P)$. We also define $\cost(P)=\min_I \overline I(P)$, i.e. the minimum sum of weights of falsified constraints, that fulfills $\opt(P)+\cost(P)=\weight(P)$.

\ignorar{
For convenience, we generalize constraint problems to multisets and consider $\{\wcla{w_1}{C},$ $\wcla{w_2}{C}\}$ as equivalent to $\{\wcla{w_1+w_2}{C}\}$. When the weight of a constraint is not explicitly specified, it is assumed to be one.

When using inference rules, we assume that they can be applied with any weight and that premises are decomposed conveniently. For instance, if we have the inference rule $C=0, C=1\vdash \eclause$, and we apply it to the problem $P=\{\wcla{w_1}{A=0},\,\wcla{w_2}{A=1}\}$, assuming $w_1\geq w_2$, we proceed as follows. First, we decompose the first constraint to get $P'=\{\wcla{w_2}{A=0},\,\wcla{w_1-w_2}{A=0},\,\wcla{w_2}{A=1}\}$. Then, we apply the rule with weight $w_2$ to get $P'=\{\wcla{w_2}{\eclause},\,\wcla{w_1-w_2}{A=0}\}$. When applying a rule, we assume that at least one of the premises is not decomposed.
}

Many optimization problems may be formalized as the problem of finding an optimal assignment for a constraint problem. In the following, we define some of them:
\begin{enumerate}
    \item MaxkSAT is the constraint family defined by the constraint functions of the form $f(x_1,\dots,x_r)= l_1\vee\cdots\vee l_r$, where every $l_i$ may be either $x_i$ or $\neg x_i$ and $r\leq k$.
    \item MaxkXOR is the constraint family defined by the constraint functions $f(x_1,\dots,x_r)=x_1\oplus\cdots\oplus x_r$ and $f(x_1,\dots,x_r)=x_1\oplus\cdots\oplus x_r\oplus 1$, where $r\leq k$.
\end{enumerate}

In this paper, we are interested in the problem Max2XOR, which has got very little attention in the literature, probably because it is quite similar to MaxCUT. 

Next, we define gadgets as a transformation from constraints into sets of weighted constraints (or problems). These transformations can be extended to define transformations of problems into problems.

\begin{definition}
Let $\mathcal{F}_1$ and $\mathcal{F}_2$ be two constraint families.
A \emph{$(\alpha,\beta)$-gadget} from $\mathcal{F}_1$ to $\mathcal{F}_2$ is a function that, for any constraint $f(\vec{x})$ over $\mathcal{F}_1$ returns a weighted constraint problem $P=\{\wcla{w_i}{g_i(\vec{x},\vec{b})}\}_{i=1,\dots,m}$ over $\mathcal{F}_2$ and variables $\{\vec{x}\}\cup\{\vec{b}\}$, where $\vec{b}$ are auxiliary variables distinct from $\vec{x}$, such that $\beta=\sum_{i=1}^m w_i$ and, for any assignment $I:\{\vec{x}\}\to\{0,1\}$:
\begin{enumerate}
    \item If $I(f(\vec{x}))=1$, for any extension of $I$ to $I':\{\vec{x}\}\cup\{\vec{b}\}\to\{0,1\}$, $I'(P)\leq \alpha$ and there exist one of such extension with $I'(P) = \alpha$.
    \item If $I(f(\vec{x}))=0$, for any extension of $I$ to $I':\{\vec{x}\}\cup\{\vec{b}\}\to\{0,1\}$, $I'(P)\leq \alpha-1$ and if, additionally,  there exists one of such extension with $I'(P) = \alpha-1$ we say the gadget is strict.
\end{enumerate}
\end{definition}

\begin{lemma}\label{lem-concatenation}
The composition of a $(\alpha_1,\beta_1)$-gadget from $\mathcal{F}_1$ to $\mathcal{F}_2$ and a $(\alpha_2,\beta_2)$-gadget from $\mathcal{F}_2$ to $\mathcal{F}_3$ results into a $(\beta_1\,(\alpha_2-1)+\alpha_1,\ \beta_1\beta_2)$-gadget from $\mathcal{F}_1$ to~$\mathcal{F}_3$.
\end{lemma}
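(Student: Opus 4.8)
The plan is to realise the composite gadget as ``apply the inner gadget to every constraint produced by the outer one,'' and then to check the two clauses of the gadget definition one block at a time, feeding each block to the appropriate condition of one of the two given gadgets.

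\textbf{Construction.} Write $g$ for the $(\alpha_1,\beta_1)$-gadget from $\mathcal F_1$ to $\mathcal F_2$ and $h$ for the $(\alpha_2,\beta_2)$-gadget from $\mathcal F_2$ to $\mathcal F_3$. Given a constraint $f(\vec x)$ over $\mathcal F_1$, first form $P=g(f)=\{\wcla{w_i}{g_i(\vec x,\vec b)}\}_{i=1,\dots,m}$ over $\mathcal F_2$, where $\sum_i w_i=\beta_1$. For each $i$, apply $h$ to the constraint $g_i(\vec x,\vec b)$, obtaining a problem $Q_i$ over $\mathcal F_3$ of weight $\beta_2$; rename its auxiliary variables $\vec c_i$ so that they are fresh and pairwise disjoint, and disjoint from $\vec x$ and $\vec b$. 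Multiply every weight of $Q_i$ by the positive number $w_i$ and output $R=\bigcup_{i=1}^m w_i\,Q_i$, with auxiliary variables $\vec b$ together with all of the $\vec c_i$. Then $\weight(R)=\sum_i w_i\beta_2=\beta_1\beta_2$, which is the claimed second parameter; set $\alpha=\beta_1(\alpha_2-1)+\alpha_1$.

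\textbf{The per-block estimate (the routine core).} Fix $I:\{\vec x\}\to\{0,1\}$ and any extension $I''$ of $I$ to all variables of $R$, and let $I'$ be the restriction of $I''$ to $\{\vec x\}\cup\{\vec b\}$. For each $i$, the restriction of $I''$ to the variables of $Q_i$ extends $I'$ viewed as an input assignment to the constraint $g_i(\vec x,\vec b)$, so the two conditions of $h$ give $I''(Q_i)\le\alpha_2$ when $I'(g_i(\vec x,\vec b))=1$ and $I''(Q_i)\le\alpha_2-1$ when $I'(g_i(\vec x,\vec b))=0$; in either case $I''(Q_i)\le(\alpha_2-1)+I'(g_i(\vec x,\vec b))$. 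Multiplying by $w_i$, summing over $i$, and using $\sum_i w_i=\beta_1$ and $\sum_i w_i\,I'(g_i(\vec x,\vec b))=I'(P)$ yields
\[
I''(R)=\sum_{i=1}^m w_i\,I''(Q_i)\ \le\ \beta_1(\alpha_2-1)+I'(P).
\]
Now invoke $g$: if $I(f(\vec x))=1$ then $I'(P)\le\alpha_1$, so $I''(R)\le\alpha$; and if $I(f(\vec x))=0$ then $I'(P)\le\alpha_1-1$, so $I''(R)\le\alpha-1$. These are exactly the two upper-bound requirements in the gadget definition.

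\textbf{Achievability, and where I expect the difficulty.} For the existential half of condition~1, assume $I(f(\vec x))=1$, use condition~1 of $g$ to pick an extension $I'$ with $I'(P)=\alpha_1$, and then, independently on each of the disjoint blocks $\vec c_i$, extend $I'$ so as to make $I''(Q_i)$ as large as possible --- using condition~1 of $h$ to reach $\alpha_2$ on the blocks where $I'(g_i(\vec x,\vec b))=1$. Summing the per-block contributions $w_i\,I''(Q_i)$ reproduces $\beta_1(\alpha_2-1)+\alpha_1=\alpha$ \emph{provided} that, on every block where $I'$ falsifies $g_i(\vec x,\vec b)$, the value $\alpha_2-1$ is genuinely attainable inside $Q_i$. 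This is the one place where the argument is not purely mechanical: such falsified blocks cannot be avoided whenever $\alpha_1<\beta_1$, and hitting $\alpha_2-1$ there is precisely what \emph{strictness of the inner gadget} $h$ buys. So I would either prove the lemma with $h$ assumed strict --- in which case the identical bookkeeping with $I'(P)=\alpha_1-1$ (available when $g$ is strict as well) also shows the composite is strict --- or carry through the refined per-block estimate in which ``$\alpha_2-1$'' is replaced by $\sup_{\vec c_i}I''(Q_i)$, so that the $\alpha$ one proves achievable is the honest one. Apart from that point, the proof is just the collection of the per-block bounds above.
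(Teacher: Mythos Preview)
Your argument is essentially the same block-by-block computation the paper gives: apply the second gadget to each constraint emitted by the first, bound each block by $\alpha_2$ or $\alpha_2-1$ according to whether that inner constraint is satisfied, multiply by $w_i$, and sum. The paper phrases the satisfied case as ``the second gadget satisfies $\alpha_2-1$ of the falsified plus $\alpha_2$ of the satisfied,'' giving $(\alpha_2-1)(\beta_1-\alpha_1)+\alpha_2\alpha_1=\beta_1(\alpha_2-1)+\alpha_1$, and similarly $\beta_1(\alpha_2-1)+\alpha_1-1$ in the falsified case.

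You are in fact more careful than the paper on the achievability clause. The paper's sentence ``satisfies $\alpha_2-1$ of the falsified'' silently assumes that, on every $\mathcal F_2$-block falsified by the chosen intermediate assignment, one can actually \emph{reach} $\alpha_2-1$; as you point out, that is exactly strictness of the inner gadget $h$, and without it the existential half of condition~1 for the composite need not hold when $\alpha_1<\beta_1$. So your flagged caveat is real, and the paper's proof tacitly uses the same hypothesis without naming it.
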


\begin{proof}
The first gadget multiplies the total weight of constraints by $\beta_1$, and the second by $\beta_2$. Therefore, the composition multiplies it by $\beta=\beta_1\beta_2$.

For any assignment, if the original constraint is falsified, the optimal extension after the first gadget satisfies constraints with a weight of $\alpha_1-1$, and falsifies the rest $\beta_1-(\alpha_1-1)$. The second gadget satisfies constraints for a weight of $\alpha_2-1$ of the falsified plus $\alpha_2$ of the satisfied. Therefore, the composition satisfies constraints with a total weight $(\alpha_2-1)(\beta_1-(\alpha_1-1))+\alpha_2(\alpha_1-1)= \beta_1(\alpha_2-1)+\alpha_1-1$.

If the original constraint is satisfied, the optimal extension after the first gadget satisfies $\alpha_1$ and falsifies the rest $\beta_1-\alpha_1$. After the second gadget the weight of satisfied constraints is $(\alpha_2-1)(\beta_1-\alpha_1)+\alpha_2\alpha_1=\beta_1(\alpha_2-1)+\alpha_1$.

The difference between both situations is one, hence the composition is a gadget, and $\alpha=\beta_1(\alpha_2-1)+\alpha_1$.
\end{proof}

\ignorar{
\begin{lemma}
If $P$ is translated into $P'$ using a $(\alpha,\beta)$-gadget, then
\[
\begin{array}{l}
\opt(P')=(\alpha-1)\weight(P)+\opt(P)\\
\cost(P')=(\beta-\alpha)\weight(P)+\cost(P)
\end{array}
\]
\end{lemma}

The previous lemma allows us to obtain an algorithm to maximize $P$, using an algorithm to maximize $P'$. If this algorithm is an approximation algorithm on the optimum, in order to get the minimal error, we are interested in gadgets with minimal $\alpha$. That's because, traditionally, a gadget is called \emph{optimal} if it minimizes $\alpha$. In our case, we will use proof systems that derive empty clauses, i.e. that prove lower bounds for the cost. Therefore, in our case, we will be interested in gadgets that minimize $\beta-\alpha$.

In the case that $P$ is a decision problem, then we can say that $P$ is unsatisfiable if, and only if, $\cost(P') \geq (\beta-\alpha)\weight(P)+1$.
}

\ignorar{
\section{Some Simple Gadgets}\label{sec:simple-gadgets}

As we described in the introduction, there exists a simple translation from OR clauses to XOR constraints that do not need to introduce new variables. In particular, for binary clauses, it is as follows.

\begin{lemma}\label{lem:Max2SAT->Max2XOR}
The following set of transformations:
\[
\begin{array}{l@{\ \to\ }l}
{x} & \{\wcla{1}{x=1}\}\\
{\neg x} & \{\wcla{1}{x=0}\}\\
{x\vee y} & \{\wcla{1/2}{x=1},\ \wcla{1/2}{y=1},\ \wcla{1/2}{x\oplus y=1}\}\\
{x\vee \neg y} & \{\wcla{1/2}{x=1},\ \wcla{1/2}{y=0},\ \wcla{1/2}{x\oplus y=0}\}\\
{\neg x\vee y} & \{\wcla{1/2}{x=0},\ \wcla{1/2}{y=1},\ \wcla{1/2}{x\oplus y=0}\}\\
{\neg x\vee\neg y} & \{\wcla{1/2}{x=0},\ \wcla{1/2}{y=0},\ \wcla{1/2}{x\oplus y=1}\}
\end{array}
\]
defines a $(1,3/2)$-gadget from Max2SAT to Max2XOR.
\end{lemma}

Notice that $x_1\oplus\cdots\oplus x_n=0$ and $x_1\oplus\cdots\oplus x_n=1$ are incompatible constraints, hence they can be canceled. Notice also that the order of the variables is irrelevant and $x\oplus x\oplus C = C$. Therefore, we can simplify any PC problem to get an equivalent problem where, for every subset of variables, there is only one constraint.

\begin{example}\label{ex1}
Given the Max2SAT problem:
\[
\left\{
\arraycolsep 3mm
\begin{array}{llll}
\wcla{1}{y} & \wcla{2}{x\vee y}          & \wcla{2}{y\vee \neg z} &\wcla{1}{x\vee z}\\
            & \wcla{1}{\neg x\vee\neg y}      & \wcla{3}{\neg y\vee z} & \wcla{2}{\neg x\vee\neg z}\\
            & \wcla{1}{x\vee\neg y} &                        &\wcla{3}{\neg x\vee z}
\end{array}
\right\}
\]
the gadget described in Lemma~\ref{lem:Max2SAT->Max2XOR}, and its simplification results into the Max2XOR problem:
\[
\left\{
\arraycolsep 5mm
\begin{array}{ll}
     \wcla{1}{x=0}& \wcla{1}{x\oplus y=1}\\
     \wcla{1/2}{y=1}   & \wcla{5/2}{y\oplus z=0}\\
     \wcla{3/2}{z=1}
\end{array}
\right\}
\]
\end{example}


We can go further and reduce Max2XOR to MaxCUT. Since the constraint family MaxCUT is a subset of the constraint family Max2XOR, the most natural is to make the translation as follows.

\begin{lemma}\label{lem:Max2XOR->MaxCUT}
Given a Max2XOR problem, adding a particular variable, called $\hat 0$, and applying the following transformations:
\[
\begin{array}{l@{\ \to\ }l}
x=1 & \{x\oplus \hat 0 = 1\}\\
x=0 & \{x\oplus a = 1,\ a\oplus \hat 0=1\}\\
x\oplus y = 1 & \{x\oplus y = 1\}\\
x\oplus y = 0 & \{x\oplus b = 1,\ b\oplus y = 1\}
\end{array}
\]
where $a$ and $b$ are auxiliary variables,
we get a $(2,2)$-gadget from Max2XOR to MaxCUT.
\end{lemma}

Notice that, strictly speaking, the transformations described in Lemma~\ref{lem:Max2XOR->MaxCUT} do not define a gadget, since the auxiliary variable $0$ is not \emph{local} to a constraint, but global to the whole problem.

Alternatively, we can reduce the number of auxiliary variables and constraints if we add two special auxiliary variables called $\hat 0$ and $\hat 1$ and we use the transformations:
\[
\begin{array}{l@{\ \to\ }l}
x=1 & \{x\oplus \hat 0 = 1\}\\
x=0 & \{x\oplus \hat 1 = 1\}\\
& \{\wcla{W}{\hat 0 \oplus \hat 1 = 1}\}\\
x\oplus y = 1 & \{x\oplus y = 1\}\\
x\oplus y = 0 & \{x\oplus b = 1,\ b\oplus y = 1\}
\end{array}
\]
where, if $P$ is the original Max2XOR problem, $\displaystyle W=\min\left\{\sum_{\wcla{w}{x=0}\in P} w,\ \sum_{\wcla{w}{x=1}\in P} w\right\}$. 

The third constraint is added only once and depends on the whole original problem $P$. This special constraint ensures that optimal solutions $I$ of the resulting MaxCUT problem satisfy
$I(\hat 0)\neq I(\hat 1)$. Let $w_i = \sum_{\wcla{w}{x=i}\in P}w$, for $i=0,1$, i.e. $w_i$ is the sum of the weights of constraints of the form $\wcla{w}{x=i}$.  With this alternative transformation, we observe that, from a set of constraints of the form $x=0$ and $x=1$ with total weight $w_0+w_1$, we get a set of constraints of the form $x\oplus \hat 0 = 1$, $x\oplus \hat 1 = 1$ and $\hat 0 \oplus \hat 1 = 1$ with total weight $w_0+w_1+\min\{w_0,w_1\}\leq 3/2(w_0+w_1)$. 

\ignorar{
We can make a further transformation to optimize the reduction described in Lemma~\ref{lem-2PC->MaxCUT}. Given a Max2XOR problem, consider the weighted graph $G=(V,E_{blue},E_{red})$, where $V$ is the set of variables plus the special node $\hat 0$, and there are two kinds of edges: a blue edge between $x$ and $y$, for every constraint $x\oplus y=1$, a red edge between $x$ and $y$, for every constraint $x\oplus y=0$, a blue edge between $x$ and $\hat{0}$, for every constraint $x=1$, and a red edge between $x$ and $\hat{0}$, for every constraint $x=0$. Every edge with the weight of the corresponding constraint. By replacing variable $x$ by its negation $1-x'$, where $x'$ is a new variable, we replace constraints $x=i$ by $x' = 1-i$ and constraints $x\oplus y=i$ by $x'\oplus y=1-i$. Hence, we replace node $x$ by $x'$ and exchange the colors of all edges reaching this node. The following lemma ensures that we can find a subset of variables such that by replacing them with their negation, we can ensure that the number of blue edges (or the sum of weights of blue edges) is bigger than the number of red edges.

\begin{lemma}\label{lem-two-colors}
Consider a graph with two kinds of edges (say blue and red). Consider an operation that selects a node and switches colors of all edges that affect this node. Applying this operation we can always get a graph where one color (say blue) is equally or more preponderant that the other (say red).
\end{lemma}
\begin{proof}
Assume that we want to get more blue than red nodes. We can select nodes with strictly more red edges than blue edges and switch them. This strictly increases the number of blue edges. Therefore, there are no cycling situations. When the number of blue edges cannot be increased by this strategy, we can ensure that all nodes are affected by the same or more blue edges than red edges. This ensures that in the graph there are the same or more blue edges than red edges.
\hfill\qed
\end{proof}

Given a Max2XOR problem $P$, for $i=0,1$, we define
$$
w_i = \sum_{\wcla{w}{x=i}\in P}w + \sum_{\wcla{w}{x\oplus y=i}\in P}w
$$

Notice that, in Lemma~\ref{lem-2PC->MaxCUT}, constraints of the form $x=0$ or $x\oplus y=0$ result into two edges of the resulting MaxCUT problem, and finally into a $(2,2)$-gadget. Whereas, constraints of the form $x=1$ or $x\oplus y=0$ result in only one edge, i.e. into a $(1,1)$-gadget. This means that we can try to minimize the number (or the weight) of equal-to-zero PC constraints, i.e. $w_0$.
Lemma~\ref{lem-two-colors} allows us to translate any 2PC problem into an equivalent 2PC problem where $w_1>w_0$, while preserving $w_0+w_1$. We can define an \emph{average} gadget where $\alpha= \frac{w_0\,2+w_1\,1}{w_0+w_1}$ and $\beta= \frac{w_0\,2+w_1\,1}{w_0+w_1}$. These average $\alpha$ and $\beta$ have the same properties as traditional $\alpha$ and $\beta$ in order to compute $r$-approximability.

\begin{corollary}\label{cor-2PC->MaxCUT}
For any Max2XOR problem $P$, there exists a set of variables $V'$ such that, replacing $x$ by $1-x'$, for all $x\in V'$, results into an equivalent Max2XOR problem $P'$ where $w'_1 \geq w'_0$ and $w_0+w_1 = w'_0+w'_1$. 

The transformations described in Lemma~\ref{lem-2PC->MaxCUT} applied to this equivalent Max2XOR problem, define an average $(3/2,3/2)$-gadget.
\end{corollary}

\begin{example}
Consider the Max2XOR obtained in Example~\ref{ex1}:
\begin{center}
\begin{tikzpicture}
\node[nodo] at (0,0) (x){$x$};
\node[nodo] at (1,0) (y){$y$};
\node[nodo] at (2,0) (z){$z$};
\node[nodo] at (1,2) (0){$\hat{0}$};
\draw[arrowblue] (0) -- (z) node[midway] {$3/2$};
\draw[arrowblue] (y) -- (0) node[midway] {$1/2$};
\draw[arrowblue] (x) -- (y) node[midway] {$1$};
\draw[arrowred] (y) -- (z) node[midway] {$5/2$};
\draw[arrowred] (x) -- (0) node[midway] {$1$};
\end{tikzpicture}
\hspace{10mm}
\raisebox{8mm}{$
\left\{
\arraycolsep 3mm
\begin{array}{ll}
     \wcla{1}{x=0}     & \wcla{1}{x\oplus y=1}\\
     \wcla{1/2}{y=1}   & \wcla{5/2}{y\oplus z=0}\\
     \wcla{3/2}{z=1}
\end{array}
\right\}
$}
\end{center}

We have $w_0=7/2$ and $w_1=3$ (hence $w_0>w_1$). By selecting the variables $V'=\{x,y\}$, and replacing them by their negations, we transform this problem into the following equivalent problem:

\begin{center}
\begin{tikzpicture}
\node[nodo] at (0,0) (x){$x'$};
\node[nodo] at (1,0) (y){$y'$};
\node[nodo] at (2,0) (z){$z$};
\node[nodo] at (1,2) (0){$\hat{0}$};
\draw[arrowblue] (x) -- (0) node[midway] {$1$};
\draw[arrowblue] (0) -- (z) node[midway] {$3/2$};
\draw[arrowblue] (x) -- (y) node[midway] {$1$};
\draw[arrowblue] (y) -- (z) node[midway] {$5/2$};
\draw[arrowred] (y) -- (0) node[midway] {$1/2$};
\end{tikzpicture}
\hspace{10mm}
\raisebox{8mm}{$
\left\{
\arraycolsep 3mm
\begin{array}{ll}
     \wcla{1}{x'=1}& \wcla{1}{x'\oplus y'=1}\\
     \wcla{1/2}{y'=0}   & \wcla{5/2}{y'\oplus z=1}\\
     \wcla{3/2}{z=1}
\end{array}
\right\}
$}
\end{center}

\noindent
where $w_0 =1/2 < 6 = w_1.$
This problem can be translated then into the following MaxCUT problem:

\begin{center}
\begin{tikzpicture}
\node[nodo] at (0,0) (x){$x'$};
\node[nodo] at (1,0) (y){$y'$};
\node[nodo] at (2,0) (z){$z$};
\node[nodo] at (1,2) (0){$\hat{0}$};
\node[nodo] at (1,1) (a){$a$};
\draw[arrowblue] (x) -- (y) node[midway] {$1$};
\draw[arrowblue] (y) -- (z) node[midway] {$5/2$};
\draw[arrowblue] (x) -- (0) node[midway] {$1$};
\draw[arrowblue] (0) -- (z) node[midway] {$3/2$};
\draw[arrowblue] (y) -- (a) node[midway] {$1/2$};
\draw[arrowblue] (a) -- (0) node[midway] {$1/2$};
\end{tikzpicture}
\end{center}

\end{example}
}

}

\section{Quantum Annealers as Max2XOR Solvers}

As we mention in the introduction, a quantum annealer can be seen as a Max2XOR solver. Roughly speaking, it is able to reach the state that minimizes the energy of the Ising Hamiltonian $\mathcal{H}=\sum_i h_i\pauli{i} +\sum_{(i,j)\in E} J_{ij}\pauli{i}\pauli{j}$, where $\pauli{i}$ is the z-Pauli operator acting on qubit $i$. For every linear term $h_i\pauli{i}$ of this Hamiltonian, we can get a 2XOR constraint $\wcla{2h_i}{x_i = 0}$, when $h_i>0$, or $\wcla{-2h_i}{x_i=1}$, when $h_i<0$. Similarly, for each quadratic term $J_{ij}\pauli{i}\pauli{j}$, we get a constraint $\wcla{2J_{ij}}{x_i\oplus x_j = 1}$, when $J_{ij}>0$, or $\wcla{-2J_{ij}}{x_i\oplus x_j=0}$, when $J_{ij}<0$. Conversely, any Max2XOR problem may be mapped into an Ising Hamiltonian.

In the following, we review with further detail the related work on constructing Hamiltonians for SAT and MaxSAT from the perspective of gadgets. 

When in \cite{quantumMax2SAT} they say that every 2SAT clause $x_i\vee x_j$ contributes to the Hamiltonian as $\frac{\mathbb{I}-\pauli{i}}{2}\frac{\mathbb{I}-\pauli{j}}{2}=\frac{1}{4}\mathbb{I}-\frac{1}{4}\pauli{i}-\frac{1}{4}\pauli{j}+\frac{1}{4}\pauli{i}\pauli{j}$, they are implicitly defining a $(1,3/2)$-gadget from 2SAT to 2XOR:
\begin{equation}
x_i\vee x_j \to 
\left\{\begin{array}{l}
\wcla{1/2}{x_i=1},\\
\wcla{1/2}{x_j=1},\\
\wcla{1/2}{x_i\oplus x_j=1}
\end{array}\right.
\label{eq-2SAT-2XOR}
\end{equation}

In the following, we only describe the translation of clauses $x_1\vee\cdots\vee x_k$ where all variables are positive. We can easily generalize the transformation when any of the variables $x$ is negated, simply by recalling that $\neg x\oplus y = k$ is equivalent to $x\oplus y = 1-k$, and $\neg x\oplus \neg y = k$ is equivalent to $x\oplus y=k$.

Notice also that we normalize the weights to ensure that the gap between the sum of weights when the original constraint is falsified ($\alpha-1$) and the sum when it is satisfied ($\alpha$) is just one. Later, when from the Max2XOR problem we construct the Hamiltonian, we can multiply the weights for the maximal value that still allows biases and couplings to be inside their corresponding ranges $[-1,1]$, while we try to maximize this gap. For example, in the case of the previous gadget $\{\wcla{1/2}{x_1=1},\ 
\wcla{1/2}{x_2=1},\ 
\wcla{1/2}{x_1\oplus x_2=1}\}$, when we transform it into a Hamiltonian $-\frac{1}{4}\pauli{1}-\frac{1}{4}\pauli{2}+\frac{1}{4}\pauli{1}\pauli{2}$ we could multiply all weights by $4$ and still getting all coefficients in the range $[-1,1]$. This allows us to enlarge the energy gap between the lowest-energy state and the next classical state to $4$. We call this quantity the \emph{energy gap} of the gadget, noted $\Delta E$. Remark that, as defined below, this energy gap only corresponds to the minimal difference of energy levels when we consider the translation of \emph{just one} clause. When more clauses are involved, if we want to keep biases and coupling inside their corresponding ranges, the situation is more complicated, obtaining smaller energy differences when normalizing. However, what we define as \emph{energy gap} is a good indicator of how good the gadget is, as it has been experimentally observed in \cite{cerquides}.



Formally,
\begin{definition}
Given a gadget $P$ to Max2XOR or, in general, a Max2XOR problem $P$, we define \emph{energy gap} as
\[
\Delta E = \min\left\{
\min_{i}\frac{1}{|h_i|}, 
\min_{i,j}\frac{1}{|J_{ij}|}
\right\}
\]
where, remember that
\[
\begin{array}{l}
h_i = \frac{1}{2}(\sum_{\wcla{w}{x_i=0}\in P}w-\sum_{\wcla{w}{x_i=1}\in P}w)\\
J_{ij}=\frac{1}{2}(\sum_{\wcla{w}{x_i\oplus x_j=1}\in P}w-\sum_{\wcla{w}{x_i\oplus x_j=0}\in P}w)
\end{array}
\]
Assuming that the problem $P$ is simplified, i.e. contains a unique constraint of the form $x_i=0$ or $x_i=1$, and a unique constraint of the form $x_i\oplus x_j =0$ or $x_i\oplus x_j = 1$, this simplifies to
\[
\Delta E = \min_{\wcla{w}{C}\in P}\ \frac{2}{w}
\]
\end{definition}

As we mentioned, N\"u\ss lein et al.~\cite{Nusslein} propose two ways to reduce Max3SAT to QUBO and compare them with previous methods. Here, we will only discuss the reduction that they call {\sc N\"usslein}$^{n+m}$, for which they report the best results (the other ones make use of more additional variables). In this reduction, each clause $x_1\vee x_2\vee x_3$ contributes to the QUBO problem as:
\[
2x_1x_2-2x_1b-2x_2b+x_3b-x_3+b
\]
Formalized as a contribution to an Ising Hamiltonian, this is:
\[
H=\frac{z_1z_2}{2}-\frac{z_1b}{2}-\frac{z_2b}{2}+\frac{z_3b}{4}-\frac{z_3}{4}-\frac{b}{4}
\]
We can formalize it as the $(5/2,9/2)$-gadget with $\Delta E = 2$ from 3SAT to Max2XOR represented in Figure~\ref{fig-nusslein}. Notice that the weights in the 2XOR constraints correspond to the coefficients of the Hamiltonian multiplied by $2$. 
In the rest of the paper, we will use this graphical representation where solid blue lines between $x$ and $y$ represent $x\oplus y=1$ and red dashed lines represent $x\oplus y=0$. Constraints $x=0$ and $x=1$ are graphically represented using lines to a special node $\hat{1}$.

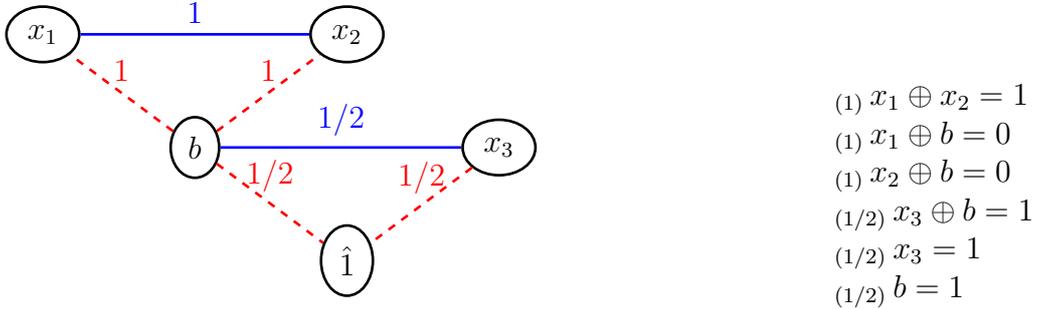
\begin{figure}[t]
\centering
\begin{tikzpicture}[xscale=2,yscale=1.5]
    \node[nodo] at (2,0) (1){$\hat{1}$};
    \node[nodo] at (0,2) (x1){$x_1$};
    \node[nodo] at (2,2) (x2){$x_2$};
    \node[nodo] at (3,1) (x3){$x_3$};
    \node[nodo] at (1,1) (b){$b$};
    \draw[arrowblue] (x1) --(x2) node[midway,above] {$1$};
    \draw[arrowblue] (b) -- (x3) node[midway,above] {$1/2$};
    \draw[arrowred] (b) -- (x1) node[midway,above] {$1$};
    \draw[arrowred] (b) -- (x2) node[midway,above] {$1$};
    \draw[arrowred] (b) -- (1) node[midway,above] {$1/2$};
    \draw[arrowred] (x3) -- (1) node[midway,above] {$1/2$};
    \end{tikzpicture}
\hfill$
\begin{array}[b]{l}
\wcla{1}{x_1\oplus x_2=1}\\
\wcla{1}{x_1\oplus b=0}\\
\wcla{1}{x_2\oplus b=0}\\
\wcla{1/2}{x_3\oplus b=1}\\
\wcla{1/2}{x_3=1}\\
\wcla{1/2}{b=1}
\end{array}
$
\caption{Graphic representation of the $(5/2,9/2)$-gadget with $\Delta E=2$ from 3SAT to Max2XOR proposed by N\"u\ss lein et al.~\cite{Nusslein}. Blue edges represent equal-one constraints, and red dashed edges are equal-zero constraints.}\label{fig-nusslein}
\end{figure}

Chancellor et al.~\cite{Chancellor} propose a method to translate kSAT to QUBO. It is based on the equivalence of the kSAT constraints $x_1\vee\cdots\vee x_k$ and the polynomial $1+\sum_{S\subseteq \{1,\dots,k\}} (-1)^{|S|+1}\prod_{i\in S} x_i$ and the use of a new qubit to represent terms of size bigger than two. There are still different ways to select the coupling factors. For a convenient election, we can get the gadget represented in Figure~\ref{fig-chancellor}.

\begin{figure}[t]
\centering
\begin{tikzpicture}[xscale=2,yscale=2]
    \node[nodo] at (4,3) (1){$\hat{1}$};
    \node[nodo] at (0,2) (x1){$x_1$};
    \node[nodo] at (2,2) (x2){$x_2$};
    \node[nodo] at (1,3.732) (x3){$x_3$};
    \node[nodo] at (1,2.577) (b){$b$};
    \draw[arrowblue] (x1) --(x2) node[midway,above] {$1/2$};
    \draw[arrowblue] (x1) --(x3) node[midway,left] {$1/2$};
    \draw[arrowblue] (x2) --(x3) node[midway,right] {$1/2$};
    \draw[arrowblue] (b) -- (x3) node[midway,left] {$1/2$};
    \draw[arrowblue] (b) -- (x1) node[midway,above] {$1/2$};
    \draw[arrowblue] (b) -- (x2) node[midway,above] {$1/2$};
    \draw[arrowred] (b) -- (1) node[midway,left] {$1/2$};
    \draw[arrowred] (x1) -- (1) node[midway,below] {$1/2$};
    \draw[arrowred] (x2) -- (1) node[midway,below] {$1/2$};
    \draw[arrowred] (x3) -- (1) node[midway,above] {$1/2$};
    \end{tikzpicture}
\hfill$
\begin{array}[b]{l}
\wcla{1/2}{x_1\oplus x_2=1}\\
\wcla{1/2}{x_1\oplus x_3=1}\\
\wcla{1/2}{x_2\oplus x_3=1}\\
\wcla{1/2}{x_1\oplus b=1}\\
\wcla{1/2}{x_2\oplus b=1}\\
\wcla{1/2}{x_3\oplus b=1}\\
\wcla{1/2}{x_1=1}\\
\wcla{1/2}{x_2=1}\\
\wcla{1/2}{x_3=1}\\
\wcla{1/2}{b=1}
\end{array}
$
\caption{Graphic representation of the $(3,5)$-gadget with $\Delta E=4$ from 3SAT to Max2XOR obtained with the method proposed by Chancellor et al.~\cite{Chancellor}.}\label{fig-chancellor}
\end{figure}

Bian et al.~\cite{quantumSebastianiConference,quantumSebastianiJournal} propose a method that can be used to reduce kSAT (or even more general formulas) to Max2XOR. It is based on the translation of each constraint of the form $x_1\vee x_2\leftrightarrow x_3$ as a contribution of
\[
\frac{z_1z_2}{2}-z_1z_3-z_2z_3+\frac{z_1}{2}+\frac{z_2}{2}-z_3+\frac{5}{2}
\]
to the Hamiltonian of a Ising model. Implicitly, it defines the non-strict $(3,9/2)$-gadget with $\Delta E=2$:
\begin{equation}
(x_1\vee x_2 \leftrightarrow b)\to
\left\{\begin{array}{l}
\wcla{1/2}{x_1=0}\\
\wcla{1/2}{x_2=0}\\
\wcla{1}{b=1}\\
\wcla{1/2}{x_1\oplus x_2=1}\\
\wcla{1}{x_1\oplus b=0}\\
\wcla{1}{x_2\oplus b=0}\\
\end{array}\right.
\label{eq-sebastiani}
\end{equation}
Notice that the values of the polynomial coefficients correspond to the weights of the constraints in the gadget. The independent term $5/2$ is only introduced in order to ensure that the ground-state energy is zero. Recall that this is just a convention, and the values of this term are not restricted to any range.

The application of the Tseitin encoding also can be seen as a gadget. For instance, the $(2,2)$-gadget to reduce 3SAT
\begin{equation}
x_1\vee x_2\vee x_3 \to 
\left\{\begin{array}{l}
x_1\vee x_2\leftrightarrow b\\
b\vee x_3
\end{array}\right.
\label{eq-tseiting}
\end{equation}
The composition of this gadget~(\ref{eq-tseiting}) with both (\ref{eq-2SAT-2XOR}) and (\ref{eq-sebastiani}), results in the $(4,6)$-gadget with $\Delta E =1$ from 3SAT to Max2XOR used in \cite{quantumSebastianiJournal} to translate 3SAT clauses into Hamiltonian components, and represented here in Figure~\ref{fig-sebastiani}.

\begin{figure}[t]
\centering
\begin{tikzpicture}[xscale=2,yscale=1.5]
    \node[nodo] at (2,0) (1){$\hat{1}$};
    \node[nodo] at (0,2) (x1){$x_1$};
    \node[nodo] at (2,2) (x2){$x_2$};
    \node[nodo] at (3,1) (x3){$x_3$};
    \node[nodo] at (1,1) (b){$b$};
    \draw[arrowblue] (x1) to [in=180, out = 270] (1) node at (0.6,0.6) {$1/2$};
    \draw[arrowblue] (x2) -- (1) node[midway,right] {$1/2$};
    \draw[arrowblue] (x1) --(x2) node[midway,above] {$1/2$};
    \draw[arrowblue] (b) -- (x3) node[midway,above] {$1/2$};
    \draw[arrowred] (b) -- (x1) node[midway,above] {$1$};
    \draw[arrowred] (b) -- (x2) node[midway,above] {$1$};
    \draw[arrowred] (b) -- (1) node[midway,above] {$3/2$};
    \draw[arrowred] (x3) -- (1) node[midway,above] {$1/2$};
    \end{tikzpicture}
\hfill$
\begin{array}[b]{l}
\wcla{1/2}{x_1=0}\\
\wcla{1/2}{x_2=0}\\
\wcla{1/2}{x_3=1}\\
\wcla{3/2}{b=1}\\
\wcla{1/2}{x_1\oplus x_2=1}\\
\wcla{1}{x_1\oplus b=0}\\
\wcla{1}{x_2\oplus b=0}\\
\wcla{1/2}{x_3\oplus b=1}
\end{array}
$
\caption{Graphic representation of the $(4,6)$-gadget with $\Delta E=4/3$ obtained by Tseitin encoding of a 3SAT clause, as proposed by Bian et al.~\cite{quantumSebastianiConference,quantumSebastianiJournal}.}\label{fig-sebastiani}
\end{figure}
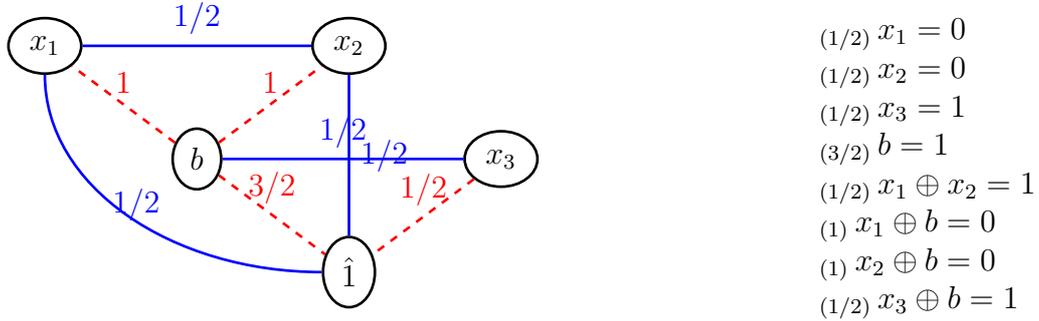

\section{New Gadgets From SAT to Max2XOR}\label{sec:MaxSAT->Max2XOR}

As we have seen in the previous section, the construction of Hamiltonians to solve a problem $P$ with a quantum annealer is basically the search for a gadget from the constraints in $P$ to Max2XOR. In general, there are three aspects that we would like to optimize in this gadget. First, we want to reduce the number of auxiliary variables (or ancillary variables in quantum terminology) because they imply using more qubits. Second, we want to relax the structure of the gadget, and in general avoid using gadgets with a dense structure (such as cliques), because there are limitations in the architecture of the quantum annealer, i.e., the graph of allowed couplings. Third, we want to maximize the energy gap, because this reduces the probability of errors.

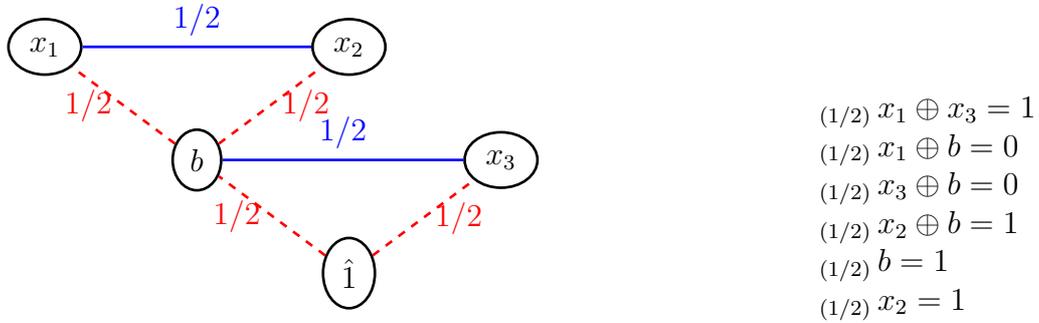
\begin{figure}[t]
\centering
\begin{tikzpicture}[xscale=2,yscale=1.5]
    \node[nodo] at (2,0) (1){$\hat{1}$};
    \node[nodo] at (0,2) (x1){$x_1$};
    \node[nodo] at (3,1) (x3){$x_3$};
    \node[nodo] at (2,2) (x2){$x_2$};
    \node[nodo] at (1,1) (b){$b$};
    \draw[arrowblue] (x3) -- (b) node[midway,above] {$1/2$};
    \draw[arrowblue] (x1) -- (x2) node[midway,above] {$1/2$};
    \draw[arrowred] (1) -- (x3) node[midway,right] {$1/2$};
    \draw[arrowred] (1) -- (b) node[midway,left] {$1/2$};
    \draw[arrowred] (b) -- (x1) node[midway,left] {$1/2$};
    \draw[arrowred] (b) -- (x2) node[midway,right] {$1/2$};
    \end{tikzpicture}
    \hfill
    $
    \begin{array}[b]{l}
    \wcla{1/2}{x_1 \oplus x_3 = 1}\\
    \wcla{1/2}{x_1 \oplus b = 0}\\
    \wcla{1/2}{x_3 \oplus b = 0}\\
    \wcla{1/2}{x_2 \oplus b = 1}\\
    \wcla{1/2}{b  = 1}\\
    \wcla{1/2}{x_2 = 1}\\
    \end{array}
    $
    \caption{Graphical representation of the $(2,3)$-gadget  with $\Delta E=4$  reducing Max3SAT to Max2XOR and based on Trevisan's~\cite{gadgets} gadget from Max3SAT to Max2SAT.}
    \label{fig:Max3SAT->Max2XOR}
\end{figure}

\subsection{A Gadget from 3SAT to Max2XOR}

An easy way to reduce SAT to Max2XOR is to reduce SAT to 3SAT, using the gadget~(\ref{eq-MaxSAT->Max3SAT}), this to Max2SAT, using the gadget~(\ref{eq-Max3SAT->Max2SAT})~\cite{gadgets,AnsoteguiL21},
and this to Max2XOR, using the gadget~(\ref{eq-2SAT-2XOR}).

The concatenation of the gadget~(\ref{eq-Max3SAT->Max2SAT}) and the gadget~(\ref{eq-2SAT-2XOR}) results into the $(2,3)$-gadget with $\Delta E=4$ described in Figure~\ref{fig:Max3SAT->Max2XOR}. This gadget is similar to the gadgets described in Figures~\ref{fig-nusslein}, \ref{fig-chancellor} and \ref{fig-sebastiani}.
However, this gadget is optimal in terms of minimizing $\alpha$, $\beta$, $\Delta E$ and the number of auxiliary variables. 

However, when we concatenate gadget~(\ref{eq-MaxSAT->Max3SAT}) with the gadget in Figure~\ref{fig:Max3SAT->Max2XOR} to obtain a reduction from kSAT to Max2XOR, for $k\geq 4$, we get a $(2(k-2),\,3(k-2))$-gadget with $2k-5$ auxiliary variables and $\Delta E=4$. This one is not optimal, in the sense that, as we will see below, there exist other gadgets that introduce fewer variables with the same $\Delta E$, and smaller $\alpha$ and $\beta$ parameters. 

In our experience, in general, if we concatenate two gadgets, even if both are optimal (in some parameter), the resulting gadget can be suboptimal (in that parameter). Therefore, it is usually better to compute \emph{direct} gadgets. In the following, we describe direct and better reductions (for some of the parameters) from SAT to Max2XOR.

\subsection{A Regular-Like Gadget}

Here, we present a new gadget inspired by the Refined-Regular gadget from SAT to Max2SAT introduced in~\cite{AnsoteguiL21}.
The reduction is based on a function $T^0$ that takes as parameters a SAT clause and a variable. The use of this variable is for technical reasons, to make simpler the recursive definition and proof of Lemma~\ref{lem:T0}. Later (see Theorem~\ref{thm-MaxSAT->PC}), it will be replaced by the constant one.

\begin{definition}
Given a SAT clause $x_1\vee\cdots\vee x_k$ and an auxiliary variable $b$, define the Max2XOR problem $T^0(x_1\vee\cdots\vee x_k,b)$ recursively as follows:
\[
T^0(x_1\vee x_2,b) = \left\{
\begin{array}{l}
\wcla{1/2}{x_1\oplus x_2=1}\\
\wcla{1/2}{x_1\oplus b=0}\\
\wcla{1/2}{b\oplus x_2=0}
\end{array}\right.
\]
for binary clauses, and 

\[
T^0(x_1\vee\cdots\vee x_k,b) = 
T^0(x_1\vee\cdots\vee x_{k-1},b')\cup
\left\{\begin{array}{l}
\wcla{1/2}{b'\oplus x_k=1}\\
\wcla{1/2}{b'\oplus b=0}\\
\wcla{1/2}{b\oplus x_k=0}
\end{array}\right.
\]

\noindent for $k\geq 3$.
\end{definition}

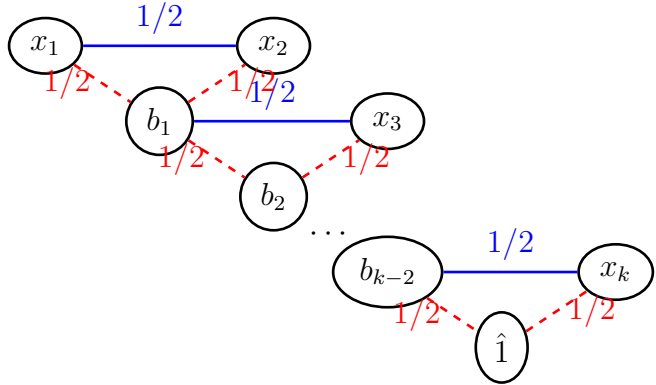
\begin{figure}[t]
\centering
\begin{tikzpicture}[xscale=1.5]
\node[nodo] at (0,4) (x1){$x_1$};
\node[nodo] at (1,3) (b1){$b_1$};
\node[nodo] at (2,4) (x2){$x_2$};
\node[nodo] at (2,2) (b2){$b_2$};
\node[nodo] at (3,3) (x3){$x_3$};
\node[nodo] at (3,1) (bn2){$b_{k-2}$};
\node[nodo] at (4,0) (bn1){$\hat{1}$};
\node[nodo] at (5,1) (xn){$x_k$};
\node at (2.5,1.5) {$\cdots$};
\draw[arrowblue] (x1) -- (x2) node[midway,above] {$1/2$};
\draw[arrowblue] (b1) -- (x3) node[midway,above] {$1/2$};
\draw[arrowblue] (bn2) -- (xn) node[midway,above] {$1/2$};
\draw[arrowred] (x1) -- (b1) node[midway,left] {$1/2$};
\draw[arrowred] (b1) -- (b2) node[midway,left] {$1/2$};
\draw[arrowred] (bn2) -- (bn1) node[midway,left] {$1/2$};
\draw[arrowred] (b1) -- (x2) node[midway,right] {$1/2$};
\draw[arrowred] (b2) -- (x3) node[midway,right] {$1/2$};
\draw[arrowred] (bn1) -- (xn) node[midway,right] {$1/2$};
\end{tikzpicture}
\caption{Graphical representation of $T^0(x_1\vee\cdots\vee x_k, \hat{1})$, defining a $(k-1,3/2(k-1)$-gadget with $\Delta E=4$ form kSAT to Max2XOR that introduces $k-2$ variables. For $k=3$, this gadget corresponds to the one in Figure~\ref{fig:Max3SAT->Max2XOR}.}\label{fig-MaxSAT->PC}
\end{figure}

\begin{lemma}\label{lem:T0}
Consider the Max2XOR problem $T^0(x_1\vee\cdots\vee x_k, b_{k-1})$. We have:
\begin{enumerate}
\item For any assignment $I:\{x_1,\dots,x_k\}\to\{0,1\}$,
the extension of the assignment as $I'(b_{i})=I(x_{i+1})$, for $i=1,\dots,k-1$, maximizes the number of satisfied 2XOR constraints, that is $2\,(k-1)$.
\item For any assignment $I:\{x_1,\dots,x_k,b_{k-1}\}\to\{0,1\}$ satisfying $I(b_{k-1})=1$, the extension of the assignment  as
\[
I'(b_{i})=\left\{
\begin{array}{ll}
1 & \mbox{if $I(x_{i+2})=\cdots=I(x_k)=0$}\\[2mm]
I(x_{i+1}) &\mbox{otherwise}
\end{array}
\right.
\]
for $i=1,\dots,k-2$, maximizes the number of satisfied 2XOR constraints, that is $2\,(k-2)$, if $I(x_1)=\cdots=I(x_k)=0$, or $2\,(k-1)$ otherwise.
\end{enumerate}
\end{lemma}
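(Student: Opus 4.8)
The plan is to regroup the $3(k-1)$ constraints of $T^0(x_1\vee\cdots\vee x_k,b_{k-1})$ into $k-1$ ``triangles'' that share variables but no constraints. Setting $b_0:=x_1$ for uniformity, triangle $j$ (for $0\le j\le k-2$) is $\{\,b_j\oplus x_{j+2}=1,\ b_j\oplus b_{j+1}=0,\ b_{j+1}\oplus x_{j+2}=0\,\}$; unrolling the recursive definition shows that $T^0$ is exactly the union of these triangles. The first step is the \emph{universal upper bound}: XOR-summing the three equations of one triangle yields $0=1$, so no assignment satisfies all three of them, hence every extension of $I$ satisfies at most $2$ constraints per triangle and therefore at most $2(k-1)$ in total.

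The second step is to read off, for fixed $x_{j+2}$, the contribution of triangle $j$ as a function of $b_j,b_{j+1}$: it equals $[b_j\neq x_{j+2}]+[b_j=b_{j+1}]+[b_{j+1}=x_{j+2}]$ (brackets denoting $0/1$ truth values), which is $2$ unless $b_j=x_{j+2}$ and $b_{j+1}\neq x_{j+2}$, in which case it is $0$. In short, triangle $j$ is \emph{full} (worth $2$) iff $b_j\neq x_{j+2}$ or $b_{j+1}=x_{j+2}$. With this, Part~1 is immediate: the extension $b_i=I(x_{i+1})$ forces $b_{j+1}=x_{j+2}$ for every $j$, so all $k-1$ triangles are full and the value is $2(k-1)$, which is optimal by the universal bound.

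For Part~2 the endpoint $b_{k-1}$ is frozen to $1$, so triangle $k-2$ can no longer always be made full, and I would split into two cases. If some $x_m=1$, the stated rule still makes every triangle full: when $x_{j+3}=\cdots=x_k=0$ the rule (or the frozen endpoint) gives $b_{j+1}=1$, and then either $x_{j+2}=1=b_{j+1}$, or $x_{j+2}=0$ while $b_j=1\neq x_{j+2}$ (using $x_1=1$ in the boundary case $j=0$); and when $x_{j+3}=\cdots=x_k=0$ fails, the rule sets $b_{j+1}=x_{j+2}$ outright. So the value is $2(k-1)$, optimal. If instead $x_1=\cdots=x_k=0$, the rule sets every $b_i=1$, which empties triangle $0$ (value $0$) but keeps all others full, giving $2(k-2)$; this is optimal because if all $k-1$ triangles were full, the implication ``$b_j=x_{j+2}\Rightarrow b_{j+1}=x_{j+2}$'' applied along the chain would propagate $b_1=\cdots=b_{k-2}=0$ from $b_0=x_1=0$, forcing triangle $k-2$ to be empty against $b_{k-1}=1$ — a contradiction — so at most $k-2$ triangles are full and $2(k-2)$ cannot be beaten.

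An alternative is straight induction on $k$ following the recursive definition: peel off the last triangle $\{b_{k-2}\oplus x_k=1,\ b_{k-2}\oplus b_{k-1}=0,\ b_{k-1}\oplus x_k=0\}$ and apply the induction hypothesis to $T^0(x_1\vee\cdots\vee x_{k-1},b_{k-2})$, invoking Part~1 of the hypothesis when $x_k=1$ (the rule then sets $b_{k-2}=x_{k-1}$) and Part~2 when $x_k=0$ (the rule then sets $b_{k-2}=1$, which is precisely the hypothesis $I(b_{k-2})=1$ of Part~2 at size $k-1$); the only extra point is to check that no other value of $b_{k-2}$ does better, which follows since $b_{k-2}=0$ makes the last triangle worth $0$ while the rest is bounded by Part~1. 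I expect the main obstacle to be the bookkeeping in Part~2: handling the boundary pieces of the recursive rule (the $j=0$ triangle, the frozen endpoint, and the base cases $k=2,3$) and making the propagation argument for optimality airtight rather than merely convincing.
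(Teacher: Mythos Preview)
Your proposal is correct. The triangle decomposition, the observation that each triangle contributes exactly $0$ or $2$, and the propagation argument for optimality in the all-zero case are all sound; the boundary cases ($j=0$, the frozen endpoint $b_{k-1}$) are handled properly.

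The paper takes a different route: it does not prove Lemma~\ref{lem:T0} directly but defers to Lemma~\ref{lem:parallel}, which treats the more general tree-like construction $T^t$ and is proved by induction on $k$ (peeling off the last triangle, case-splitting on the value of the inner auxiliary variable, and invoking the induction hypothesis). Your main argument is instead a direct, non-inductive analysis of the $T^0$ chain via the explicit ``full vs.\ empty'' criterion for each triangle; this avoids setting up an induction and yields the tight lower bound in the all-zero case by a clean propagation argument rather than by case enumeration. The paper's approach has the advantage that it simultaneously covers all tree shapes $T^t$; yours is more elementary and self-contained for $T^0$. Your alternative sketch (straight induction on $k$, invoking Part~1 when $x_k=1$ and Part~2 when $x_k=0$) is essentially the specialization of the paper's inductive proof to the sequential case, so you have in fact outlined both approaches.
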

\begin{proof}
See Lemma~\ref{lem:parallel}, where a more general result is proved.
\end{proof}

\ignorar{
\begin{proof}
For the first statement: the proof is by induction on $k$. 

For $k=2$, we never can satisfy the three constraints $T^0(x_1\vee x_2,b_1) = \{x_1\oplus x_2=1,\ x_1\oplus b_1=0,\ b_1\oplus x_2=0\}$. We can satisfy two of them if we set $b_1$ equal to $x_1$ or to $x_2$ (we choose the second possibility).

For $k>2$, we have:
\[
T^0(x_1\vee\cdots\vee x_k,b_{k-1}) = \begin{array}[t]{l}
T^0(x_1\vee\cdots\vee x_{k-1},b_{k-2})\cup\\[2mm]
\{\wcla{1/2}{b_{k-2}\oplus x_k=1},\ \wcla{1/2}{b_{k-2}\oplus b_{k-1}=0},\ \wcla{1/2}{b_{k-1}\oplus x_k=0}\}
\end{array}
\]
We can never satisfy all three constraints $\{b_{k-2}\oplus x_k=1,\ b_{k-2}\oplus b_{k-1}=0,\ b_{k-1}\oplus x_k=0\}$. By setting $I'(b_{k-1})=I(x_k)$, it does not matter what value $b_{k-2}$ gets, we always satisfy two of them. By induction hypothesis, we know that setting $I'(b_i)=I(x_{i+1})$, for $i=1,\dots,k-2$, we maximize the number of satisfied constraints, i.e. $2(k-2)$, from the rest of constraints in $T^0(x_1\vee\dots\vee x_{k-1},b_{k-2})$. Therefore, we can satisfy a maximum of $2(k-1)$ constraints from $T^0(x_1\vee\dots\vee x_k,b_{k-1})$.

For the second statement: the proof is also by induction on $k$. 

For $k=2$ and $I(b_1)=0$, the number of satisfied constraints is zero, if $I(x_1)=I(x_2)=0$, and two, otherwise.

For $k>2$, there are two cases: 

If $I(x_k)=0$ and $I(b_{k-1})=1$, assigning $I(b_{k-2})=1$ we maximize the number of satisfied constraints involving $b_{k-2}$, satisfying at least two of them, whereas assigning $I(b_{k-2})=0$, we can satisfy at most two of them. Hence, we take the first option. Then, we use the induction hypothesis to prove that we can satisfy other $2\,(k-2)$ constraints if at least some of the $x_i$ is one, for $i=1,\dots,k-1$, or only $2\,(k-3)$, if all them are zero.

If $I(x_k)=1$ and $I(b_{k-1})=1$, it does not matter what value $b_{k-2}$ gets, we satisfy two constraints from $\{b_{k-2}\oplus x_k=1,\ b_{k-2}\oplus b_{k-1}=0,\ b_{k-1}\oplus x_k=0\}$. Then, we use the first statement of the lemma, and the same assignment for $b$'s, to prove that the proposed assignment $I'$ maximizes the satisfaction of the rest of constraints, i.e. $2\,(k-2)$ of them.
\end{proof}
}

\begin{theorem}\label{thm-MaxSAT->PC}
The translation of every clause $x_1\vee\cdots\vee x_k$ into $T^0(x_1\vee\cdots\vee x_k, \hat{1})$ defines a $(k-1,3/2(k-1))$-gadget from kSAT to Max2XOR that introduces $k-2$ auxiliary variables and has $\Delta E=4$.
\end{theorem}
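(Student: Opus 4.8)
The plan is to verify directly that $T^0(x_1 \vee \cdots \vee x_k, \hat{1})$ satisfies the three defining conditions of a $(\alpha,\beta)$-gadget with $\alpha = k-1$ and $\beta = 3/2(k-1)$, relying on Lemma~\ref{lem:T0} with $b_{k-1}$ instantiated to the constant $\hat{1}$ (i.e.\ the special node that is always assigned $1$). First I would compute $\beta$: by the recursive definition, $T^0$ for a binary clause contributes three constraints of weight $1/2$, total $3/2$, and each step from $k-1$ to $k$ adds three more constraints of weight $1/2$, i.e.\ another $3/2$; hence $\beta = 3/2(k-1)$ by an immediate induction. The count of auxiliary variables is equally mechanical: $T^0(x_1\vee x_2, b)$ uses the single auxiliary $b$, and each recursive step introduces exactly one fresh variable $b'$; since the outermost $b$ is replaced by $\hat{1}$ (not a fresh variable, as $\hat{1}$ is global), the clause of size $k$ introduces $k-2$ auxiliary variables $b_1,\dots,b_{k-2}$.

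Next I would establish the two satisfaction conditions. For condition~1 (clause satisfied): given any $I$ with $I(x_1\vee\cdots\vee x_k)=1$, since we set $\hat{1}=1$ we are exactly in the situation of part~2 of Lemma~\ref{lem:T0} with $I(b_{k-1})=1$ and not all $x_i$ equal to $0$; the lemma gives an extension $I'$ satisfying $2(k-1)$ of the weight-$1/2$ constraints, for a total weight of $(k-1)$, and states this is the maximum. So every extension has value $\le k-1 = \alpha$ and one attains it. For condition~2 (clause falsified): then $I(x_1)=\cdots=I(x_k)=0$, and part~2 of Lemma~\ref{lem:T0} says the optimal extension satisfies exactly $2(k-2)$ constraints, i.e.\ weight $k-2 = \alpha-1$. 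This simultaneously shows that no extension exceeds $\alpha-1$ and that the bound $\alpha-1$ is attained, so the gadget is in fact strict.

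Finally I would compute $\Delta E$. Since all constraints produced by $T^0$ have weight $1/2$, and (after the identification $\hat 1$ is an ordinary qubit) the problem is already simplified — each pair of variables supports at most one 2XOR constraint and each variable at most one unary constraint, which one should check from the graph in Figure~\ref{fig-MaxSAT->PC} — the simplified formula for the energy gap applies: $\Delta E = \min_{\wcla{w}{C}\in P} 2/w = 2/(1/2) = 4$. The only point requiring a little care here is confirming that no two constraints in $T^0$ act on the same pair of variables (so that weights do not add up and reduce $\Delta E$); this follows because the red edges form a path $x_1 - b_1 - b_2 - \cdots - b_{k-2} - \hat 1$ together with the "rungs" $b_i - x_{i+1}$, and the blue edges are $x_1\oplus x_2$ and $b_i \oplus x_{i+2}$, all on distinct vertex pairs.

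The main obstacle is not in this theorem itself — it is essentially bookkeeping on top of Lemma~\ref{lem:T0} — but in Lemma~\ref{lem:T0}, whose proof is deferred to the more general Lemma~\ref{lem:parallel}. The one genuinely substantive point I must get right here is the translation between "$b_{k-1}$ is a free variable that we happen to set to $1$" (the setting of Lemma~\ref{lem:T0}) and "$\hat 1$ is a constant symbol forced to $1$" (the setting of a bona fide gadget), checking that replacing the former by the latter preserves both the optimal-extension values and the local/auxiliary status of the remaining $b_i$'s.
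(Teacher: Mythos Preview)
Your proposal is correct and follows essentially the same approach as the paper: both reduce the theorem to the second statement of Lemma~\ref{lem:T0} (with $b_{k-1}$ forced to $1$) and then do the arithmetic for $\alpha$ and $\beta$. Your write-up is in fact more thorough than the paper's own proof, which does not explicitly verify the auxiliary-variable count or the $\Delta E=4$ claim; your check that all edges in the gadget lie on distinct vertex pairs (so weights do not accumulate) is a useful addition, though note that your red-edge inventory omits the final rung $\hat 1\!-\!x_k$.
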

\begin{proof}
The soundness of the reduction is based on the second statement of Lemma~\ref{lem:T0}. Assume that we force the variable $\hat{1}$ to be interpreted as one, i.e. $I(\hat{1})=1$, or that, alternatively, we simply replace it by the constant $1$. Any assignment satisfying the clause can extend to satisfy at least $2(k-1)$ 2XOR constraints from $T^0(x_1\vee\cdots\vee x_k, \hat{1})$, with a total weight equal to $k-1$. Any assignment  falsifying it can only be extended to satisfy at most $2(k-2)$ 2XOR constraints, with a total weight $k-2$. Hence, $\alpha = k-1$. The number of 2XOR constraints is $3(k-1)$, all of them with weight $1/2$. Therefore, $\beta=3/2(k-1)$.
\end{proof}

\ignorar{
\begin{remark}
Consider the Max2XOR problem $P=T^0(x_1\vee\cdots\vee x_k,b_{k-1})$.

For any assignment $I:\{x_1,\dots,x_k\}\to\{0,1\}$, its extension, defined by $I(b_i)=I(x_1)\vee\cdots\vee I(x_{i+1})$, for $i=1,\dots k-1$ (similar but not equal to the one proposed in the proof of the first statement of Lemma~\ref{lem:T0}, also) maximizes the number of satisfied constraints of $P$, satisfying $I(P)=k-1$.

For any assignment $I:\{x_1,\dots,x_k,b_{k-1}\}\to\{0,1\}$, where $I(b_{k-1})=1$,
its extension $I(b_i)=I(x_1)\vee\cdots\vee I(x_{i+1})$, for $i=1,\dots k-2$ (similar but not equal to the one proposed in the proof of the second statement of Lemma~\ref{lem:T0}, also) maximizes the number of satisfied constraints of $P$, satisfying $I(P)=k-1$, when $I(x_1\vee\cdots x_k)=1$, and $I(P)=k-2$,  when $I(x_1\vee\cdots x_k)=0$.
\end{remark}
}

\ignorar{
\section{Directly from MaxSAT to MaxCUT}

Finally, we can concatenate the $(k-1,3/2(k-1))$-gadget from MaxEkSAT to Max2XOR described in Theorem~\ref{thm-MaxSAT->PC} and the average $(3/2,3/2)$-gadget from from Max2XOR to MaxCUT described in Lemma~\ref{lem-2PC->MaxCUT} and Corollary~\ref{cor-2PC->MaxCUT}. By Lemma~\ref{lem-concatenation}, we know that we get a $(7/4(k-1),\ 9/4(k-1))$-gadget from MaxEkSAT to MaxCUT.
More detailed analysis allows us to ensure a better performance just by specifying a set of variables $V'$ to be negated in the Max2XOR problem.

\begin{theorem}
For every $k\geq 2$, there exist a $(3/2(k-1)+1/4,\ 2(k-1)+1/4)$-gadget from MaxEkSAT to MaxCUT.
\end{theorem}
\begin{proof}
We can concatenate the $(k-1,3/2(k-1))$-gadget from MaxEkSAT to Max2XOR described in Theorem~\ref{thm-MaxSAT->PC} and the gadget from Max2XOR to MaxCUT described in Lemma~\ref{lem-2PC->MaxCUT}. However, after applying the gadget from MaxEkSAT to Max2XOR, we prove that there exists a set of variables that, when negated as described in Corollary~\ref{cor-2PC->MaxCUT}, results in a lower-bounded number of blue edges (i.e. PC$_1$ constraints).

In order to maximize the number of blue constraints between $b$'s and between $b$'s and $\hat{0}$, since all them are red in Theorem~\ref{thm-MaxSAT->PC}, we negate $b_i$, for $i=k-2,k-4,k-6,\dots$. I.e. when $k$ is even, we negate all $b_i$'s with even $i$, and when $k$ is odd, we negate all $b_i$'s with odd $i$. Assume that $k$ is even (similarly if $k$ is odd). After negating $b_2,\dots, b_{k-2}$, we get:
\[
\begin{array}{l@{\ }l@{\ }l}
x_1\oplus b_1=1, & x_1\oplus x_2=1, & b_1\oplus x_2=1\\
b_1\oplus b'_2=1, & b_1\oplus x_3=0, & b'_2\oplus x_3=0\\
b'_2\oplus b_3=1, & b'_2\oplus x_4=1, & b_3\oplus x_3=1\\
& \cdots & \\
b'_{k-4}\oplus b_{k-3}=1, & b'_{k-4}\oplus x_{k-2}=1, & b_{k-3}\oplus x_{k-2}=1\\
b_{k-3}\oplus b'_{k-2}=1, & b_{k-3}\oplus x_{k-1}=0, & b'_{k-2}\oplus x_{k-1}=0\\
b'_{k-2}\oplus \hat{0}=1, & b'_{k-2}\oplus x_k=1, & \hat{0}\oplus x_k=1
\end{array}
\]
Notice that there are $k-2$ blue edges between $b_i$'s or between $b_{k-2}$ and $\hat{0}$. The rest of the edges can be red or blue and all of them connect some $x_i$ with some $x_j$, $b_j$, or $\hat{0}$. We will show to select a subset of $x_i$'s variables that, when negated, make half of these second set of $2k-1$ edges also blue. From $i=1$ until $i=n$, consider the set of edges between $x_i$ and some $b$'s, $\hat{0}$ or some $x_j$ with $j<i$. If there are more red than blue edges, negate $x_i$. This ensures that at the end of the process, the number of blue edges is at least $k-2 + \frac{2k-1}{2}= 2k-5/2$ blue edges, and at most $\frac{2k-1}{2}= k -1/2$ red edges, all them with weight $1/2$.

We have a $(1,1)$-gadget (for blue edges) from Max2XOR$_1$ to MaxCUT, and a $(2,2)$-gadget (for red edges) from Max2XOR$_0$ to MaxCUT. Similarly to the proof of Lemma~\ref{lem-concatenation}, we can prove that the concatenation of these two translations with the gadget from MaxEkSAT to Max2XOR results into:
\[
\begin{array}{lll}
\alpha &= \frac{2k-5/2}{2}(1-1) + \frac{k-1/2}{2}(2-1) + k-1 & = 3/2(k-1) + 1/4\\[2mm]
\beta &=  \frac{2k-5/2}{2} 1 + \frac{k-1/2}{2}2 &=2(k-1)+1/4
\end{array}
\]
\hfill\qed
\end{proof}

Notice that in the proof of the previous theorem we basically prove that, for big $k$, at least $2/3$ of 2PC constraints are blue. This is equivalent to proving that we have an average $(4/3,4/3)$-gadget from our special subset of Max2XOR problems into MaxCUT.
}

\subsection{A Tree-Like Gadget}

In this section, we describe the Tree-like gadget, which is a generalization of the Regular-like gadget. Before describing the new gadget, we introduce an example for the clause $x_1\vee\cdots\vee x_7$, in Figure~\ref{fig-example}.

\begin{figure}[t]
\centering
\begin{tikzpicture}[xscale=1.5]
\node[nodo] at (0,3) (x1){$x_1$};
\node[nodo] at (1,3) (x2){$x_2$};
\node[nodo] at (2,3) (x3){$x_3$};
\node[nodo] at (3,3) (x4){$x_4$};
\node[nodo] at (4,3) (x5){$x_5$};
\node[nodo] at (5,3) (x6){$x_6$};
\node[nodo] at (6.5,2) (x7){$x_7$};
\node[nodo] at (0.5,2) (b1){$b_1$};
\node[nodo] at (2.5,2) (b2){$b_2$};
\node[nodo] at (4.5,2) (b3){$b_3$};
\node[nodo] at (1.5,1) (b4){$b_4$};
\node[nodo] at (5.5,1) (b5){$b_5$};
\node[nodo] at (3.5,0) (1){$\hat{1}$};
\draw[arrowblue] (x1) -- (x2);
\draw[arrowblue] (x3) -- (x4);
\draw[arrowblue] (x5) -- (x6);
\draw[arrowblue] (b1) -- (b2);
\draw[arrowblue] (b3) -- (x7);
\draw[arrowblue] (b4) -- (b5);
\draw[arrowred] (x1) -- (b1);
\draw[arrowred] (x2) -- (b1);
\draw[arrowred] (x3) -- (b2);
\draw[arrowred] (x4) -- (b2);
\draw[arrowred] (x5) -- (b3);
\draw[arrowred] (x6) -- (b3);
\draw[arrowred] (b1) -- (b4);
\draw[arrowred] (b2) -- (b4);
\draw[arrowred] (b3) -- (b5);
\draw[arrowred] (x7) -- (b5);
\draw[arrowred] (b4) -- (1);
\draw[arrowred] (b5) -- (1);
\end{tikzpicture}
\caption{A graphical representation of a Tree-like gadget for 7SAT. All constraints have weight $1/2$.}\label{fig-example}
\end{figure}
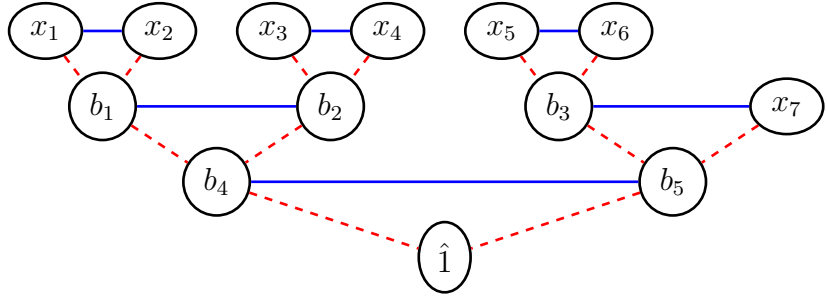

\ignorar{
By negating some of the $b$'s and $x$'s we can get an equivalent 2PC problem with only blue edges:

\begin{center}
\begin{tikzpicture}
\node[nodo] at (0,3) (x1){$\neg x_1$};
\node[nodo] at (1,3) (x2){$\neg x_2$};
\node[nodo] at (2,3) (x3){$\neg x_3$};
\node[nodo] at (3,3) (x4){$\neg x_4$};
\node[nodo] at (4,3) (x5){$\neg x_5$};
\node[nodo] at (5,3) (x6){$\neg x_6$};
\node[nodo] at (6.5,2) (x7){$x_7$};
\node[nodo] at (0.5,2) (b1){$b_1$};
\node[nodo] at (2.5,2) (b2){$b_2$};
\node[nodo] at (4.5,2) (b3){$b_3$};
\node[nodo] at (1.5,1) (b4){$\neg b_4$};
\node[nodo] at (5.5,1) (b5){$\neg b_5$};
\node[nodo] at (3.5,0) (1){$\hat{1}$};
    \draw[arrowblue] (x1) -- (x2);
    \draw[arrowblue] (x3) -- (x4);
    \draw[arrowblue] (x5) -- (x6);
    \draw[arrowblue] (b1) -- (b2);
    \draw[arrowblue] (b1) -- (b4);
    \draw[arrowblue] (b2) -- (b4);
    \draw[arrowblue] (x1) -- (b1);
    \draw[arrowblue] (x2) -- (b1);
    \draw[arrowblue] (x3) -- (b2);
    \draw[arrowblue] (x4) -- (b2);
    \draw[arrowblue] (x5) -- (b3);
    \draw[arrowblue] (x6) -- (b3);
    \draw[arrowblue] (x7) -- (b5);
    \draw[arrowblue] (b5) -- (1);
    \draw[arrowred] (b3) -- (x7);
    \draw[arrowred] (b4) -- (b5);
    \draw[arrowred] (b3) -- (b5);
    \draw[arrowred] (b4) -- (1);
\end{tikzpicture}
\end{center}
}

\begin{definition}
Given a clause $x_1\vee\cdots\vee x_k$ and an auxiliary variable $b$, define the Max2XOR problem $T^t(x_1\vee\cdots\vee x_k,b)$ recursively and non-deterministically as follows:
\[
T^t(x_1\vee\cdots\vee x_k,\ b) =
\tabcolsep 0mm
\left\{
\begin{tabular}{l@{\ }r}
$\{x_1\oplus x_2=1, x_1\oplus b=0, x_2\oplus b=0\}$
& if $k=2$\\[3mm]
\arraycolsep 0mm
$\begin{array}[t]{l}
T^t(x_1\vee\cdots\vee x_{k-1},\ b')\ \cup\\
\{b'\oplus x_k=1, b'\oplus b=0, x_k\oplus b=0\}

\end{array}$
&if $k\geq 3$\\[7mm]
\arraycolsep 0mm
$\begin{array}[t]{l}
T^t(x_2\vee\cdots\vee x_{k},\ b')\ \cup\\
\{b'\oplus x_1=1, b'\oplus b=0, x_1\oplus b=0\}
\end{array}$
&if $k\geq 3$\\[7mm]
\arraycolsep 0mm
$\begin{array}[t]{l}
\ T^t(x_1\vee\cdots\vee x_r,\ b')\ \cup\\
\ T^t(x_{r+1}\vee\cdots\vee x_k,\ b'')\ \cup\\
\{b'\oplus b''=1, b'\oplus b=0, b''\oplus b=0\}
\end{array}$
& 
\tabcolsep 0mm
\begin{tabular}[t]{l}
if $k\geq 4$, where \\ 
$2\leq r \leq k-2$
\end{tabular}
\end{tabular}
\right.
\]

\noindent where all XOR constraints have weight $1/2$.
\end{definition}

Notice that the definition of $T^t$ is not deterministic. When $k=3$ we have two possible translations (applying the second or third case), and when $k\geq 4$ we can apply the second, third, and fourth cases, with distinct values of $r$, getting as many possible translations as possible binary trees with $k$ leaves.

Notice also that the definition of $T^0$ corresponds to the definition of $T^t$ where we only apply the first and second cases. Therefore, $T^t$ is a generalization of $T^0$.

\begin{lemma}\label{lem:parallel}
Consider the set of 2XOR constraints of $T^t(x_1\vee\cdots\vee x_k,b)$ without weights. We have:
\begin{enumerate}
    \item There are $3(k-1)$ constraints.
    \item Any assignment satisfies at most $2(k-1)$ of them.
    \item Any assignment $I:\{x_1,\dots,x_k\}\to\{0,1\}$ can be extended to an optimal assignment satisfying $I(b)=I(x_1\vee\cdots\vee x_k)$ and $2(k-1)$ constraints.
    \item Any assignment $I$ satisfying $I(x_1\vee\cdots\vee x_k)=0$ and $I(b)=1$ can be extended to an optimal assignment satisfying $2(k-2)$ constraints.
\end{enumerate}
\end{lemma}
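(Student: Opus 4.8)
The plan is to prove all four statements simultaneously by structural induction on the recursive definition of $T^t(x_1\vee\cdots\vee x_k,b)$, following the four cases in the definition. Since $T^t$ is non-deterministic, the induction must cover every possible way the tree was built; the inductive hypothesis is that statements (1)--(4) hold for every $T^t$-instance on fewer than $k$ variables. For the base case $k=2$, the set is $\{x_1\oplus x_2=1,\ x_1\oplus b=0,\ x_2\oplus b=0\}$: these three constraints are pairwise jointly unsatisfiable (their XOR-sum is the contradiction $0=1$), so at most $2$ are satisfiable, giving (1) with $3(k-1)=3$ and (2) with $2(k-1)=2$; setting $I(b)=I(x_1)\vee I(x_2)$ one checks by the four cases $I(x_1),I(x_2)\in\{0,1\}$ that exactly $2$ are satisfied and $I(b)=I(x_1\vee x_2)$, giving (3); and when $I(x_1)=I(x_2)=0$, $I(b)=1$ forces the first two constraints false and the third true, so exactly $1=2(k-2)$ is satisfied, giving (4).

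For the inductive step I would treat the "append" case (cases two and three, which are symmetric) and the "merge" case (case four) in turn. In the append case, $T^t(x_1\vee\cdots\vee x_k,b)=T^t(x_1\vee\cdots\vee x_{k-1},b')\cup\{b'\oplus x_k=1,\ b'\oplus b=0,\ x_k\oplus b=0\}$. The three new constraints are again jointly unsatisfiable, and by the IH applied to the $(k-1)$-variable subproblem we get $3(k-2)$ old constraints and at most $2(k-2)$ satisfiable among them, so $(1)$ and $(2)$ follow by adding $3$ and $2$. For $(3)$: given $I$ on $x_1,\dots,x_k$, apply IH part (3) to extend to $b'=I(x_1\vee\cdots\vee x_{k-1})$ satisfying $2(k-2)$ old constraints, then set $b=I(x_1\vee\cdots\vee x_k)$; a short case analysis on whether $x_k$ is $0$ or $1$ shows exactly $2$ of the new constraints hold and that $b$ has the claimed value, so $2(k-1)$ total. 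For $(4)$: if $I(x_1\vee\cdots\vee x_k)=0$ then all $x_i=0$ and $b=1$; use IH part (4) on the subproblem (valid since $I(x_1\vee\cdots\vee x_{k-1})=0$ and we may choose $b'=1$) to get $2(k-3)$ old constraints, then with $x_k=0$, $b=1$, $b'=1$ exactly $1$ new constraint ($b'\oplus b=0$) holds, giving $2(k-3)+1 = 2(k-2)$. The merge case is handled the same way: $T^t=T^t(x_1\vee\cdots\vee x_r,b')\cup T^t(x_{r+1}\vee\cdots\vee x_k,b'')\cup\{b'\oplus b''=1,\ b'\oplus b=0,\ b''\oplus b=0\}$, the three connecting constraints are jointly unsatisfiable, the count is $3(r-1)+3(k-r-1)+3=3(k-1)$, and for $(3)$ one sets $b'=I(x_1\vee\cdots\vee x_r)$, $b''=I(x_{r+1}\vee\cdots\vee x_k)$, $b=I(x_1\vee\cdots\vee x_k)$ and checks the three new constraints contribute $2$ by cases on which of $b',b''$ is $1$; for $(4)$, all $x_i=0$ forces $b'=b''=1$ (via IH(4) on both halves, each losing one constraint relative to the all-satisfying count) and then $b=1$ makes exactly $1$ new constraint true, yielding $2(r-2)+2(k-r-2)+1=2(k-2)$.

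The tightness direction of $(2)$ and the optimality claims in $(3)$--$(4)$ deserve one more remark: an upper bound argument is needed to show no assignment beats $2(k-1)$. The clean way is the observation used in every case — each of the $k-1$ internal "gadget triangles" $\{u\oplus v=1,\ u\oplus w=0,\ v\oplus w=0\}$ is a set whose XOR-sum is $0=1$, hence at most $2$ of its $3$ constraints can be satisfied by any assignment, and these $k-1$ triangles partition the $3(k-1)$ constraints; summing gives the global bound $2(k-1)$, which also shows the extensions exhibited in $(3)$ are genuinely optimal. For $(4)$ the analogous partition argument shows that when the clause is falsified one triangle is forced to contribute only $1$, so $2(k-2)$ is both achieved and optimal among extensions with $I(b)=1$.

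The main obstacle I anticipate is bookkeeping in the merge case $(4)$: one must verify that demanding $I(b)=1$ together with the all-zero $x$-assignment is consistent with invoking IH part (4) on \emph{both} sub-clauses, i.e. that we are free to take $b'=b''=1$ there, and that the deficit is exactly one constraint (not two) after adding the connecting triangle. Getting the constant right — that two independent "falsified-clause" deficits of $1$ each plus a connecting triangle that now recovers to $2$ still nets a single global deficit — is the delicate point; everything else is routine case analysis on the at most four Boolean values involved in each triangle.
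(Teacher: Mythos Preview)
Your approach for statements (1), (2), and (3) is essentially the same as the paper's: the triangle partition gives the upper bound $2(k-1)$, and induction with $I(b)=I(\text{clause})$ achieves it.

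The genuine gap is in statement (4). In the append case you claim that with $b'=1$, $x_k=0$, $b=1$ exactly one new constraint holds, but in fact two hold: both $b'\oplus x_k=1$ and $b'\oplus b=0$ are true, while $x_k\oplus b=0$ is false. Your arithmetic $2(k-3)+1=2(k-2)$ is also wrong as written; the correct count $2(k-3)+2=2(k-2)$ happens to save you here.

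In the merge case the error is fatal. With $b'=b''=b=1$ the connecting triangle satisfies \emph{two} constraints (both $b'\oplus b=0$ and $b''\oplus b=0$), so the total is $2(r-2)+2((k-r)-2)+2=2(k-3)$, not $2(k-2)$. Choosing $b'=b''=1$ simply does not achieve the required value. The paper instead takes $b'=b''=0$: then by part (3) each subproblem achieves its full $2(r-1)$ and $2((k-r)-1)$, and the connecting triangle with $b=1$ satisfies zero constraints, giving $2(k-2)$. To establish that $2(k-2)$ is also the \emph{maximum}, the paper exhausts the remaining choices for $(b',b'')$ and checks each is at most $2(k-2)$; your ``one triangle is forced to contribute only $1$'' heuristic does not work, since no single triangle is pinned to exactly one satisfied constraint (for instance $b'=b''=0$, $b=1$ gives zero on the connecting triangle). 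The deficit can migrate between triangles, so optimality really requires the case split on $b',b''$.
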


\begin{proof}
The first statement is trivial.

For the second, notice that for, each one of the $k-1$ \emph{triangles} of the form $\{a\oplus b=1, a\oplus c=0, b\oplus c=0\}$,  any assignment can satisfy at most two of the constraints of each triangle.

The third statement is proved by induction. The base case, for binary clauses, is trivial. For the induction case, if we apply the second or third options of the definition of $T$, the proof is similar to Lemma~\ref{lem:T0}. In the fourth case, assume by induction that there is an assignment extending $I:\{x_1,\dots,x_r\}\to\{0,1\}$ that verifies $I(b')=I(x_1\vee\cdots\vee x_r)$ and satisfies $2(r-2)$ constraints of $T^t(x_1\vee\cdots\vee x_r,b')$. Similarly, there is an assignment extending $I:\{x_{r+1},\dots,x_k\}\to\{0,1\}$ that verifies $I(b'')=I(x_{r+1}\vee\cdots\vee x_k)$ and satisfies $2((k-r)-2)$ constraints of $T^t(x_{r+1}\vee\cdots\vee x_k,b'')$. Both assignments do not share variables, therefore, they can be combined into a single assignment. This assignment can be extended with $I(b)=b'\vee b''$. This ensures that it will satisfy two of the constraints from $\{b'\oplus b''=1, b'\oplus b=0, b''\oplus b=0\}$. Therefore, it verifies $I(b)=I(x_1\vee\cdots\vee x_k)$ and satisfies $2(r-2)+2((k-r)-2)+2=2(k-1)$ constraints of $T^t(x_1\vee\cdots\vee x_k,b)$. Since this is the maximal number of constraints we can satisfy, this assignment is optimal.

The fourth statement is also proved by induction. The base case, for binary clauses, is trivial. For the induction case, consider only the application of the fourth case of the definition of $T$ (the other cases are quite similar). We have to consider 3 possibilities:

If we extend $I(b')=I(b'')=1$, by induction, we can extend $I$ to satisfy $2(r-2)$ constraints of $T^t(x_1\vee\cdots\vee x_r,b')$ and $2((k-r)-2)$ constraints of $T^t(x_{r+1}\vee\cdots\vee x_k,b'')$. It satisfies $2$ constraints from $\{b'\oplus b''=1, b'\oplus b=0, b''\oplus b=0\}$, hence a total of $2(k-3)$ constraints.

If we extend $I(b')=I(b'')=0$, by the third statement of this lemma, we can extend $I$ to satisfy $2(r-1)$ constraints of $T^t(x_1\vee\cdots\vee x_r,b')$ and $2((k-r)-1)$ constraints of $T^t(x_{r+1}\vee\cdots\vee x_k,b'')$. It does not satisfy any constraints from $\{b'\oplus b''=1, b'\oplus b=0, b''\oplus b=0\}$, hence a total of $2(k-2)$ constraints.

Finally, If we extend $I(b')=1$ and $I(b'')=0$ (or vice versa), by induction, we can extend $I$ to satisfy $2(r-2)$ constraints of $T^t(x_1\vee\cdots\vee x_r,b')$ and by the third statement $2((k-r)-2)$ constraints of $T^t(x_{r+1}\vee\cdots\vee x_k,b'')$. It also satisfy $2$ constraints from $\{b'\oplus b''=1, b'\oplus b=0, b''\oplus b=0\}$, hence a total of $2(k-2)$ constraints.
\end{proof}

\begin{theorem}\label{thm:parallel}
The translation of every clause $x_1\vee\cdots\vee x_k$
into $T^t(x_1\vee\cdots\vee x_k,\hat{1})$ 
defines a $(k-1, 3/2(k-1))$-gadget from kSAT to Max2XOR that introduces $k-2$ auxiliary variables, with $\Delta E=4$.
\end{theorem}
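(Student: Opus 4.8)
The plan is to derive the theorem from Lemma~\ref{lem:parallel} in exactly the way Theorem~\ref{thm-MaxSAT->PC} was derived from Lemma~\ref{lem:T0}. First I would pin the outer parameter to the constant: replace $b$ by $\hat{1}$ and consider only assignments with $I(\hat{1})=1$ (equivalently, treat $\hat{1}$ as the literal $1$, so it is not among the auxiliary variables). Under this convention the four items of Lemma~\ref{lem:parallel} are precisely what is needed to verify the two defining properties of an $(\alpha,\beta)$-gadget, because the hypothesis ``$I(b)=1$'' of item~4 is then automatically met, and the extension produced in item~3 assigns $I(b)=I(x_1\vee\cdots\vee x_k)$, which is consistent with $I(\hat{1})=1$ exactly when the clause is true.

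Next I would read off the three parameters. Item~1 of Lemma~\ref{lem:parallel} gives $3(k-1)$ 2XOR constraints, each of weight $1/2$ by construction, so $\beta=\tfrac{3}{2}(k-1)$. For $\alpha$: by item~2 every assignment satisfies at most $2(k-1)$ constraints, i.e.\ weight at most $k-1$; when the clause is true item~3 exhibits an extension attaining weight $k-1$, and when the clause is false item~4 shows the best extension satisfies exactly $2(k-2)$ constraints, i.e.\ weight $k-2$ --- this gives both the upper bound $I'(P)\le\alpha-1$ and its attainment, so $\alpha=k-1$ (and the gadget is in fact strict). For the auxiliary-variable count I would run a short induction on the recursion defining $T^t$: writing $A(k)$ for the number of fresh $b$-symbols, $A(2)=0$, the second and third clauses give $A(k)=1+A(k-1)$, and the fourth gives $A(k)=2+A(r)+A(k-r)$ with $2\le r\le k-2$; in every case $A(k)=k-2$, and since the outermost symbol is the non-auxiliary node $\hat{1}$, the gadget introduces $k-2$ auxiliary variables.

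For $\Delta E$ I would check that $T^t(x_1\vee\cdots\vee x_k,\hat{1})$ is already \emph{simplified} in the sense of the energy-gap definition, i.e.\ no two of its constraints act on the same pair of variables: each triangle $\{a\oplus c=1,\ a\oplus b=0,\ c\oplus b=0\}$ corresponds to one internal node of the underlying binary tree, each leaf $x_i$ has a unique sibling and a unique parent-value, each internal value is a parent exactly once and a child at most once, and $\hat{1}$ occurs only in the root triangle, so all pairs appearing are distinct and nothing needs merging. Since every constraint has weight $1/2$, the simplified form of the energy gap evaluates to $2/(1/2)=4$.

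There is essentially no hard step here: the real content has been localized into Lemma~\ref{lem:parallel}, which already handles the non-deterministic ``split'' and ``peel'' cases of $T^t$ by induction. The only points requiring a moment's care are (i) noticing that fixing $b=\hat{1}$ is exactly what activates the hypotheses of items~3 and~4, and thus turns the ``at most $2(k-1)$'' bound into the sharper ``at most $2(k-2)$'' bound in the unsatisfied case, which is what makes $\alpha-1$ the true optimum there, and (ii) the elementary inductions for the constraint count and the auxiliary-variable count; the rest is bookkeeping identical to the proof of Theorem~\ref{thm-MaxSAT->PC}.
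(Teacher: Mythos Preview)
Your proposal is correct and follows the same approach as the paper: the paper's proof is a single sentence stating that the theorem is a direct consequence of the last two statements of Lemma~\ref{lem:parallel}, and you reproduce exactly that derivation with the additional (and welcome) care of spelling out the auxiliary-variable count and the $\Delta E$ computation that the paper leaves implicit.
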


\begin{proof}
The theorem is a direct consequence of the two last statements of Lemma~\ref{lem:parallel}.
\end{proof}

If we compare Theorems~\ref{thm-MaxSAT->PC} and~\ref{thm:parallel}, we observe that both establish the same values for $\alpha=k-1$ and $\beta=3/2(k-1)$ in the $(\alpha,\beta)$-gadget, and both introduce the same number $(k-2)$ of auxiliary variables. Therefore, a priory, the parallel translation is a generalization of the sequential translation, but it does not have a clear (theoretical) advantage. 

However, in the context of quantum annealers, we recall that not all possible couplings are allowed, only the ones in the graph defined by the architecture. To circumvent this issue the typical approach is to make \emph{copies} (using extra qubits) of the variables involved in the Max2XOR constraint whose coupling is not allowed. In other words, if we have a constraint $\wcla{w}{x_i\oplus x_j=1}$ in the Max2XOR problem, but $(i,j)\not \in E$, we have to find a path and replace the constraint by $\{\wcla{1}{x_i \oplus x_{k_1}=0}, \dots, \wcla{1}{x_{k_{r-1}}\oplus x_r=0}, \wcla{w}{x_r \oplus x_j = 1}\}$

In this sense, the more \emph{flexible} structure of the parallel translation can help to fit more easily (i.e. with fewer extra qubits) the Max2XOR constraints generated by the gadget into the architecture of the quantum annealer.

\ignorar{ 
However, if we think of industrial instances with high modularity, it makes sense to generate a tree (like the one in Example~\ref{ex:parallelTranslation}), where distant variables in the original SAT problem, are located distantly in the tree. For instance, using the algorithms for analyzing the modularity of industrial SAT instances (see \cite{SAT12,JAIR19}) we can determine how distant are two variables in the original SAT problem, or even better, we can obtain a dendrogram of these variables. Then, we can use this distance or the dendrogram to construct a better translation, i.e. a tree where distant variables are kept in distinct branches. This will have the effect of increasing the modularity of the problem, hence decreasing its difficulty. We think that by using these trees for industrial SAT instances, we will obtain better experimental results than with the sequential translation.
}

\begin{figure}[t]
\centering
\hfill
    \begin{tikzpicture}[xscale=1.5, yscale=2]
    \node at (2,2){$\cdots$};
    \node[nodo] at (2,0) (1){$\hat{1}$};
    \node[nodo] at (1,2) (x1){$x_1$};
    \node[nodo] at (3,2) (x5){$x_5$};
    \node[nodo] at (0,1) (b1){$b_1$};
    \node[nodo] at (4,1) (b2){$b_2$};
    \draw[arrowblue] (x1) to[in=155,out=25] (x5); \node[arrowblue] at (2,2.4) {$1/2$};
    \draw[arrowblue] (b1) -- (x1) node[midway,left] {$5/6$};
    \draw[arrowblue] (b1) -- (x5) node[midway,above] {$5/6$};
    \draw[arrowblue] (b2) -- (x1) node[midway,above] {$2/3$};
    \draw[arrowblue] (b2) -- (x5) node[midway,right] {$2/3$};
    \draw[arrowblue] (b1) -- (b2) node[midway,above] {$5/6$};
    \draw[arrowred] (x1) -- (1) node[midway,below] {$1/2$};
    \draw[arrowred] (x5) -- (1) node[midway,below] {$1/2$};
    \draw[arrowred] (b1) -- (1) node[midway,below] {$2/3$};
    \draw[arrowred] (b2) -- (1) node[midway,below] {$5/6$};
    \end{tikzpicture}
\hfill$
\begin{array}[b]{l}
\wcla{1/2}{x_i=1}\\
\wcla{2/3}{b_1=1}\\
\wcla{5/6}{b_2=1}\\
\wcla{1/2}{x_i\oplus x_j=1}\\
\wcla{5/6}{x_i\oplus b_1=1}\\
\wcla{2/3}{x_i\oplus b_2=1}\\
\wcla{5/6}{b_1\oplus b_2=1}\\[2mm]
\mbox{where $i,j=1,\dots,5$}\\
\mbox{and $i\neq j$}
\end{array}
$
\caption{Graphic representation of the clique-like $(10,52/3)$-gadget from 5SAT to Max2XOR, with $\Delta E=12/5$, and optimal number of $2$ auxiliary variables.}\label{fig-cliquek5}
\end{figure}

\subsection{A Clique-Like Gadget}

The previous gadgets have an energy gap $\Delta E= 4$, but they are not optimal with respect to the number of auxiliary variables. We can obtain gadgets that only require $\mathcal{O}(\log k)$ auxiliary variables, on expenses of decreasing the energy gap and increasing the density of constraints. Therefore, there is a trade-off between optimizing the energy gap and the number of auxiliary variables.

\begin{definition}
Given a clause $x_1\vee\cdots\vee x_k$, where $k$ is a power of $2$, define the Max2XOR problem $T^c(x_1\vee\cdots\vee x_k)$ as follows:
\[
T^c(x_1\vee\cdots\vee x_k)=
\left\{
\begin{array}{l@{\hspace{5mm}}l}
\wcla{1/2}{x_i = 1} & i=1,\dots, k\\
\wcla{2^{j-1}}{b_j = 1} & j=1,\dots, \log k-1\\
\wcla{1/2}{x_i\oplus x_j = 1} & 1\leq i<j\leq k\\
\wcla{2^{i+j-1}}{b_i\oplus b_j=1} & 1\leq i<j\leq \log k-1\\
\wcla{2^{j-1}}{x_i\oplus b_j=1} & i=1,\dots, k\\ &j=1,\dots,\log k-1\\
\end{array}\right.
\]
\end{definition}
\begin{theorem}
The transformation of every clause $x_1\vee\cdots\vee x_k$ into $T^c(x_1\vee\cdots\vee x_k)$ defines a $(\alpha,\beta)$-gadget from kSAT to Max2XOR that introduces $\log k-1$ auxiliary variables, with
\[
\begin{array}{l}
\displaystyle\alpha=\frac{1}{2}\,k\,(k-1)\\[2mm]
\displaystyle\beta=\frac{11\,k^2-15\,k+4}{12}\\[2mm]
\displaystyle\Delta E = 2^5\,k^{-2}
\end{array}
\]
\end{theorem}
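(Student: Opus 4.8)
The plan is to verify the three quantities separately, following the same structure as the proofs of Lemma~\ref{lem:parallel} and Theorem~\ref{thm:parallel}. The key combinatorial fact to establish first is the analogue of the ``triangle'' observation: the constraint set $T^c(x_1\vee\cdots\vee x_k)$ decomposes into a single weighted clique on the $k+\log k -1$ vertices $\{x_1,\dots,x_k,b_1,\dots,b_{\log k-1}\}$ together with a distinguished vertex $\hat 1$ (contributing the unary constraints $\wcla{w}{v=1}$, read as $\wcla{w}{v\oplus\hat 1=1}$), and that every edge incident to $\hat 1$ is \emph{blue} (equal-one) while the non-$\hat 1$ edges among the $x_i$'s are blue and the ones touching some $b_j$ are also blue. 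The upshot, exactly as in a MaxCUT analysis, is that for a given assignment to $x_1,\dots,x_k$ with $\hat 1\mapsto 1$, the weight of satisfied constraints is a sum of edge weights across the cut induced by the partition $\{v : I(v)=1\}$ versus $\{v: I(v)=0\}$, and so maximizing over the $b_j$'s is a small MaxCUT over the $\log k-1$ free vertices.

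For $\alpha$: I would first show that the optimal extension is obtained by choosing $I(b_j)$ so that $b_j$ is on the ``true'' side iff at least one $x_i$ is true (mirroring $I(b)=I(x_1\vee\cdots\vee x_k)$ from Lemma~\ref{lem:parallel}), and then compute the total cut weight in the two regimes. When some $x_i=1$, the true side contains $\hat 1$, all $b_j$'s, and the set $S=\{i: x_i=1\}$; the satisfied weight is the sum over all edges with exactly one endpoint false, i.e. over edges from the $|S|$ true $x$'s and all the $b_j$'s (and $\hat1$) to the $k-|S|$ false $x$'s. Here the only falsified constraints are the $\binom{k-|S|}{2}$ edges among the false $x_i$'s, each of weight $1/2$, so the satisfied weight is $\beta - \tfrac12\binom{k-|S|}{2}$, maximized at $|S|=k-1$ (one can't do $|S|=k$ and satisfy the clause-falsity case symmetrically — but for $\alpha$ we want the clause true, so $|S|\ge 1$, and the max over $1\le|S|\le k$ of $-\binom{k-|S|}{2}$ is at $|S|\in\{k-1,k\}$), giving $\alpha=\beta-0=\beta$? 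No — one must be careful: when $|S|=k$ the single falsified set is empty, so all of $\beta$ is satisfied, but then also $\hat1$ is uniquely on the true side with nothing false, so the $\hat1$-edges are \emph{not} in the cut and are falsified. This sign bookkeeping is the delicate part; I expect the correct statement is that the optimum when the clause is true equals $\tfrac12 k(k-1)$ and when false equals $\tfrac12 k(k-1)-1$, and I would pin it down by writing the cut weight as an explicit quadratic in $|S|$ and in the chosen $b$-side, then optimizing. The value $\alpha=\tfrac12 k(k-1)$ should drop out as the count of \emph{blue} edges among the $k$ original variables plus the maximal contribution of the $b$-star and $b$-clique, and the gap of exactly $1$ between the clause-true and clause-false optima is what must be checked most carefully, presumably by a direct comparison of the best $|S|=k$-minus-something configurations in each case.

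For $\beta$: this is a pure summation. I would add $\tfrac12 k$ (unary $x_i$), $\sum_{j=1}^{\log k-1}2^{j-1}=k-2$ (unary $b_j$), $\tfrac12\binom{k}{2}$ ($x_i\oplus x_j$), $\sum_{1\le i<j\le \log k-1}2^{i+j-1}$ ($b_i\oplus b_j$), and $\sum_{i=1}^{k}\sum_{j=1}^{\log k-1}2^{j-1}=k(k-2)$ ($x_i\oplus b_j$), then check the total simplifies to $(11k^2-15k+4)/12$; the only non-immediate piece is $\sum_{i<j}2^{i+j-1}=\tfrac12\big((\sum_j 2^j)^2-\sum_j 4^j\big)/ \text{(with the right index ranges)}$, a routine geometric manipulation. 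For $\Delta E$: using the simplified formula $\Delta E=\min_{\wcla{w}{C}\in P}\tfrac2w$, the largest weight appearing is $2^{i+j-1}$ with $i,j$ as large as possible, namely $i=\log k-2$, $j=\log k-1$, giving $2^{2\log k-4}=k^2/16$, hence $\Delta E=2/(k^2/16)=2^5 k^{-2}$, after confirming no $b$-unary or $x$–$b$ weight ($2^{\log k-2}=k/4$) exceeds it. The main obstacle is clearly the $\alpha$ computation — keeping the orientation of every edge family straight and verifying the clause-true/clause-false optima differ by exactly $1$; everything else is bookkeeping.
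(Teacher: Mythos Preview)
Your plans for $\beta$ and $\Delta E$ are fine; the problem is the $\alpha$ computation, where you are missing the central idea of the gadget.

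You propose that the optimal extension sets all $b_j$ to the same value, namely $I(b_j)=I(x_1\vee\cdots\vee x_k)$, and then try to count falsified constraints as $\tfrac12\binom{k-|S|}{2}$. This cannot work: the $b_j$'s are \emph{not} interchangeable. They carry weights $2^{j-1}$ in the unary constraints and $2^{j-1}$, $2^{i+j-1}$ in the binary ones, so putting them all on the same side is optimal only for a couple of special values of $|S|$. Your fallback plan of ``writing the cut weight as an explicit quadratic in $|S|$ and in the chosen $b$-side'' inherits the same error: there is no single ``$b$-side''; the optimum depends on \emph{which subset} of the $b_j$'s is set to $1$.

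The structure you need is the following. Give each vertex a weight: $w(x_i)=1$, $w(b_j)=2^{j}$, and $w(\hat 0)=1$ for an auxiliary vertex fixed to $0$ (so that $x_i=1$ reads as $x_i\oplus\hat 0=1$). Then every constraint in $T^c$ is equal-one, and its weight is exactly $\tfrac12\,w(u)\,w(v)$ for the two endpoints $u,v$. Consequently, for any assignment the total satisfied weight equals $\tfrac12\,w_1\,w_0$, where $w_1$ is the sum of vertex weights on the $1$-side and $w_0=2k-1-w_1$ is the sum on the $0$-side (note $\sum w(v)=k+(k-2)+1=2k-1$). Maximizing $\tfrac12\,w_1(2k-1-w_1)$ over integer $w_1$ gives $\tfrac12\,k(k-1)$, attained exactly when $\{w_0,w_1\}=\{k-1,k\}$.

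Now the role of the $b_j$'s becomes clear: their weights are $2,4,\dots,k/2$, so the subset you place on the $1$-side contributes an arbitrary even number in $[0,k-2]$. If $r=\sum_i I(x_i)\ge 1$, choose the $b$-subset so that its weight is whichever of $k-r$ or $k-r-1$ is even and in range (this is the ``binary representation of $k-r$'' step in the paper); then $w_1\in\{k-1,k\}$ and the satisfied weight is $\alpha=\tfrac12 k(k-1)$. If $r=0$, the best achievable is $w_1=k-2$ (all $b_j$'s set to $1$), giving $\tfrac12(k-2)(k+1)=\alpha-1$. That is the whole argument for $\alpha$ and for strictness; there is no triangle-style decomposition here, and the analogy with Lemma~\ref{lem:parallel} is misleading.
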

\begin{proof}
The intuition for the construction of the gadget is: apart from the variables $x_i$ (with weight $1$), we add auxiliary variables $b_j$ with weight $2^j$ and a constant $\hat{0}$ with weight $1$. The sum of all weights is $2k-1$. The weight of every constraint $x\oplus y =1$ is equal to one-half of the weight of $x$ by the weight of $y$. No matter how $x$'s are assigned, if at least one is assigned to one, we can assign the $b$'s such that the weight of variables assigned to one is $k$ and the weight of the assigned to zero is $k-1$, or vice versa. This maximizes the weight of satisfied assignments that is $k(k-1)/2=\alpha$. Conversely, if all $x$'s are assigned to zero, the best we can do is assign all $b$'s to one, which results in weight $(k+1)(k-2)/2=\alpha-1$ for the satisfied constraints.

More formally, let $I$ be an assignment satisfying $x_1\vee\cdots\vee x_k$. Let $r=\sum_{i=1}^k I(x_i)\in[1,k]$. We extend $I$ such that $I(b_i)$ is the $i$th bit of the binary representation of the number $k-r$. Therefore,
\(
\sum_{i=1}^{\log k-1} I(b_i)\, 2^i 
\) is either $k-r$ or $k-r-1$. And the sum of weights of variables assigned to one,
\(
w_1\stackrel{def}{=}\sum_{i=1}^{\log k-1} I(b_i)\, 2^i + \sum_{i=1}^k x_i
\) is either $k$ or $k-1$. For the ones assigned to zero (including $\hat{0}$) we have
\(w_0 \stackrel{def}{=} \sum_{i=1}^{\log k-1} (1-I(b_i))\, 2^i + \sum_{i=1}^k (1-I(x_i)) +1=2k-1-w_1
\). Therefore, the sum of satisfied constraints is $w_1\, w_0/2 = k(k-1)/2$. 

If $I$ falsifies the clause, we can set $I(b_i)=1$ and get $w_1 = \sum_{i=1}^{\log k-1} 2^i = k-2$ and $w_0=k+1$. In this case, we get satisfied constraints for a total weight $(k+1)(k-2)/2 = k(k-1)/2 -1$.
\end{proof}

When $k$ is not a power of $2$, the gadget with a minimal number of auxiliary variables is not straightforward to generalize. In Figure~\ref{fig-cliquek5}, we represent the gadget for $k=5$ with $2$ auxiliary variables.

\section{Computing Gadgets Automatically}

As in \cite{AnsoteguiL21}, we have created a Mixed Integer Programming model to compute automatically gadgets from SAT to Max2XOR with the MIP solver Gurobi.

Given a $k$SAT clause and a set of new auxiliary variables $b_j$ the MIP program explores the subsets of weighted XOR clauses of length at most 2 that can be constructed with the $k$ input variables and the $b_j$ variables. The weights of the XOR clauses are represented with continuous variables in the interval $[0,1]$. 

The MIP model incorporates constraints to ensure that the selected subset of XOR clauses fulfills the definition of a  $(\alpha,\beta)$-gadget, and the objective function maximizes the energy gap $\Delta E$.  The MIP model is solved to optimality.

In the following table, we report the $\alpha$, $\beta$ and $\Delta E$ values found with the MIP program for some combinations of $k$ and number of auxiliary variables:

\begin{center}
\begin{tabular}{|c|c|c|c|}
\hline
&\multicolumn{3}{|c|}{Number of aux. vars.}\\
\hline
 & 1 & 2 & 3 \\
\hline
k=3 & 
$\begin{array}{l}\alpha=2\\ \beta=3 \\ \Delta E=4 \end{array}$ &&\\
\hline
k=4 &
$\begin{array}{l}\alpha=6\\ \beta=10 \\ \Delta E=2 \end{array}$ &
$\begin{array}{l}\alpha=3\\ \beta=9/2 \\ \Delta E=4 \end{array}$ &
\\
\hline
k=5 &
- &
$\begin{array}{l}\alpha=10\\ \beta=52/3 \\ \Delta E= 12/5\end{array}$ &
$\begin{array}{l}\alpha=4\\ \beta=6 \\ \Delta E=4 \end{array}$ \\
\hline
\end{tabular}
\end{center}

As we can see, the gadgets found by the MIP program presented in the diagonal of the table have the same values as the Tree-like gadget, while the gadgets under the diagonal have the same values as the Clique-like gadget, certifying that, for small values of $k$, they are optimal respect to the energy gap and the number of auxiliary variables, respectively.


\section{Conclusions}
Quantum annealers \emph{may} offer in the future an alternative to solve the SAT problem more efficiently. To take maximum advantage of this new computation architecture, we need to take into account several aspects, such as the number of qubits that needs our reformulation of the SAT problem, the energy gap, and how adaptable is our encoding to the graph describing the allowed couplings of the qubits in the quantum annealer (what may incur into the use of additional auxiliary variables).

Tree-like gadgets help us to maximize the energy gap using a linear number of auxiliary variables and are rather \emph{flexible} to adapt them efficiently to different graph architectures. On the other hand, Clique-like gadgets require a logarithmic number of auxiliary variables, (i.e. fewer qubits), but with a lower energy gap, and may use additional qubits to accommodate the dense structure of the gadget to the architecture of the quantum annealer.

\section*{Acknowledgements}
This work was supported by
MCIN/AEI/10.13039/501100011033 \textit{(Grant: PID2019-109137GB-C21)}.



\bibliographystyle{elsarticle-num-names} 
\bibliography{ijar}

\begin{thebibliography}{28}
\expandafter\ifx\csname natexlab\endcsname\relax\def\natexlab#1{#1}\fi
\providecommand{\url}[1]{\texttt{#1}}
\providecommand{\href}[2]{#2}
\providecommand{\path}[1]{#1}
\providecommand{\DOIprefix}{doi:}
\providecommand{\ArXivprefix}{arXiv:}
\providecommand{\URLprefix}{URL: }
\providecommand{\Pubmedprefix}{pmid:}
\providecommand{\doi}[1]{\href{http://dx.doi.org/#1}{\path{#1}}}
\providecommand{\Pubmed}[1]{\href{pmid:#1}{\path{#1}}}
\providecommand{\bibinfo}[2]{#2}
\ifx\xfnm\relax \def\xfnm[#1]{\unskip,\space#1}\fi
\bibitem[{Aharonov et~al.(2007)Aharonov, van Dam, Kempe, Landau, Lloyd, and
  Regev}]{equivalentquantum}
\bibinfo{author}{D.~Aharonov}, \bibinfo{author}{W.~van Dam},
  \bibinfo{author}{J.~Kempe}, \bibinfo{author}{Z.~Landau},
  \bibinfo{author}{S.~Lloyd}, \bibinfo{author}{O.~Regev},
\newblock \bibinfo{title}{Adiabatic quantum computation is equivalent to
  standard quantum computation},
\newblock \bibinfo{journal}{SIAM Journal on Computing} \bibinfo{volume}{37}
  (\bibinfo{year}{2007}) \bibinfo{pages}{166--194}.
  \DOIprefix\doi{10.1137/S0097539705447323}.
\bibitem[{Mehta et~al.(2022)Mehta, Jin, De~Raedt, and Michielsen}]{hard2SAT}
\bibinfo{author}{V.~Mehta}, \bibinfo{author}{F.~Jin},
  \bibinfo{author}{H.~De~Raedt}, \bibinfo{author}{K.~Michielsen},
\newblock \bibinfo{title}{Quantum annealing for hard 2-satisfiability problems:
  Distribution and scaling of minimum energy gap and success probability},
\newblock \bibinfo{journal}{Phys. Rev. A} \bibinfo{volume}{105}
  (\bibinfo{year}{2022}) \bibinfo{pages}{062406}. \URLprefix
  \url{https://link.aps.org/doi/10.1103/PhysRevA.105.062406}.
  \DOIprefix\doi{10.1103/PhysRevA.105.062406}.
\bibitem[{Li(2000{\natexlab{a}})}]{Li00}
\bibinfo{author}{C.~M. Li},
\newblock \bibinfo{title}{Integrating equivalency reasoning into {Davis-Putnam}
  procedure},
\newblock in: \bibinfo{editor}{H.~A. Kautz}, \bibinfo{editor}{B.~W. Porter}
  (Eds.), \bibinfo{booktitle}{Proceedings of the Seventeenth National
  Conference on Artificial Intelligence and Twelfth Conference on on Innovative
  Applications of Artificial Intelligence, July 30 - August 3, 2000, Austin,
  Texas, {USA}}, \bibinfo{publisher}{{AAAI} Press / The {MIT} Press},
  \bibinfo{year}{2000}{\natexlab{a}}, pp. \bibinfo{pages}{291--296}.
\bibitem[{Li(2000{\natexlab{b}})}]{Li00prima}
\bibinfo{author}{C.~M. Li},
\newblock \bibinfo{title}{Equivalency reasoning to solve a class of hard {SAT}
  problems},
\newblock \bibinfo{journal}{Inf. Process. Lett.} \bibinfo{volume}{76}
  (\bibinfo{year}{2000}{\natexlab{b}}) \bibinfo{pages}{75--81}.
  \DOIprefix\doi{10.1016/S0020-0190(00)00126-5}.
\bibitem[{Baumgartner and Massacci(2000)}]{BaumgartnerM00}
\bibinfo{author}{P.~Baumgartner}, \bibinfo{author}{F.~Massacci},
\newblock \bibinfo{title}{The taming of the {(X)OR}},
\newblock in: \bibinfo{editor}{J.~W. Lloyd}, \bibinfo{editor}{V.~Dahl},
  \bibinfo{editor}{U.~Furbach}, \bibinfo{editor}{M.~Kerber},
  \bibinfo{editor}{K.~Lau}, \bibinfo{editor}{C.~Palamidessi},
  \bibinfo{editor}{L.~M. Pereira}, \bibinfo{editor}{Y.~Sagiv},
  \bibinfo{editor}{P.~J. Stuckey} (Eds.), \bibinfo{booktitle}{Computational
  Logic - {CL} 2000, First International Conference, London, UK, 24-28 July,
  2000, Proceedings}, volume \bibinfo{volume}{1861} of
  \textit{\bibinfo{series}{Lecture Notes in Computer Science}},
  \bibinfo{publisher}{Springer}, \bibinfo{year}{2000}, pp.
  \bibinfo{pages}{508--522}. \DOIprefix\doi{10.1007/3-540-44957-4\_34}.
\bibitem[{Li(2003)}]{Li03}
\bibinfo{author}{C.~M. Li},
\newblock \bibinfo{title}{Equivalent literal propagation in the {DLL}
  procedure},
\newblock \bibinfo{journal}{Discret. Appl. Math.} \bibinfo{volume}{130}
  (\bibinfo{year}{2003}) \bibinfo{pages}{251--276}.
  \DOIprefix\doi{10.1016/S0166-218X(02)00407-9}.
\bibitem[{Heule and van Maaren(2004)}]{HeuleM04}
\bibinfo{author}{M.~Heule}, \bibinfo{author}{H.~van Maaren},
\newblock \bibinfo{title}{Aligning {CNF-} and equivalence-reasoning},
\newblock in: \bibinfo{booktitle}{{SAT} 2004 - The Seventh International
  Conference on Theory and Applications of Satisfiability Testing, 10-13 May
  2004, Vancouver, BC, Canada, Online Proceedings}, \bibinfo{year}{2004}.
\bibitem[{Heule et~al.(2004)Heule, Dufour, van Zwieten, and van
  Maaren}]{HeuleDZM04}
\bibinfo{author}{M.~Heule}, \bibinfo{author}{M.~Dufour},
  \bibinfo{author}{J.~van Zwieten}, \bibinfo{author}{H.~van Maaren},
\newblock \bibinfo{title}{March{\_}eq: Implementing additional reasoning into
  an efficient look-ahead {SAT} solver},
\newblock in: \bibinfo{editor}{H.~H. Hoos}, \bibinfo{editor}{D.~G. Mitchell}
  (Eds.), \bibinfo{booktitle}{Theory and Applications of Satisfiability
  Testing, 7th International Conference, {SAT} 2004, Vancouver, BC, Canada, May
  10-13, 2004, Revised Selected Papers}, volume \bibinfo{volume}{3542} of
  \textit{\bibinfo{series}{Lecture Notes in Computer Science}},
  \bibinfo{publisher}{Springer}, \bibinfo{year}{2004}, pp.
  \bibinfo{pages}{345--359}. \DOIprefix\doi{10.1007/11527695\_26}.
\bibitem[{Chen(2009)}]{Chen09}
\bibinfo{author}{J.~Chen},
\newblock \bibinfo{title}{Building a hybrid {SAT} solver via conflict-driven,
  look-ahead and {XOR} reasoning techniques},
\newblock in: \bibinfo{editor}{O.~Kullmann} (Ed.), \bibinfo{booktitle}{Theory
  and Applications of Satisfiability Testing - {SAT} 2009, 12th International
  Conference, {SAT} 2009, Swansea, UK, June 30 - July 3, 2009. Proceedings},
  volume \bibinfo{volume}{5584} of \textit{\bibinfo{series}{Lecture Notes in
  Computer Science}}, \bibinfo{publisher}{Springer}, \bibinfo{year}{2009}, pp.
  \bibinfo{pages}{298--311}. \DOIprefix\doi{10.1007/978-3-642-02777-2\_29}.
\bibitem[{Soos et~al.(2009)Soos, Nohl, and Castelluccia}]{SoosNC09}
\bibinfo{author}{M.~Soos}, \bibinfo{author}{K.~Nohl},
  \bibinfo{author}{C.~Castelluccia},
\newblock \bibinfo{title}{Extending {SAT} solvers to cryptographic problems},
\newblock in: \bibinfo{editor}{O.~Kullmann} (Ed.), \bibinfo{booktitle}{Theory
  and Applications of Satisfiability Testing - {SAT} 2009, 12th International
  Conference, {SAT} 2009, Swansea, UK, June 30 - July 3, 2009. Proceedings},
  volume \bibinfo{volume}{5584} of \textit{\bibinfo{series}{Lecture Notes in
  Computer Science}}, \bibinfo{publisher}{Springer}, \bibinfo{year}{2009}, pp.
  \bibinfo{pages}{244--257}. \DOIprefix\doi{10.1007/978-3-642-02777-2\_24}.
\bibitem[{Laitinen et~al.(2012)Laitinen, Junttila, and
  Niemel{\"{a}}}]{LaitinenJN12sat12}
\bibinfo{author}{T.~Laitinen}, \bibinfo{author}{T.~A. Junttila},
  \bibinfo{author}{I.~Niemel{\"{a}}},
\newblock \bibinfo{title}{Conflict-driven {XOR}-clause learning},
\newblock in: \bibinfo{editor}{A.~Cimatti}, \bibinfo{editor}{R.~Sebastiani}
  (Eds.), \bibinfo{booktitle}{Theory and Applications of Satisfiability Testing
  - {SAT} 2012 - 15th International Conference, Trento, Italy, June 17-20,
  2012. Proceedings}, volume \bibinfo{volume}{7317} of
  \textit{\bibinfo{series}{Lecture Notes in Computer Science}},
  \bibinfo{publisher}{Springer}, \bibinfo{year}{2012}, pp.
  \bibinfo{pages}{383--396}. \DOIprefix\doi{10.1007/978-3-642-31612-8\_29}.
\bibitem[{Soos(2010)}]{Soos10}
\bibinfo{author}{M.~Soos},
\newblock \bibinfo{title}{Enhanced gaussian elimination in {DPLL}-based {SAT}
  solvers},
\newblock in: \bibinfo{editor}{D.~L. Berre} (Ed.),
  \bibinfo{booktitle}{{POS-10.} Pragmatics of {SAT}, Edinburgh, UK, July 10,
  2010}, volume~\bibinfo{volume}{8} of \textit{\bibinfo{series}{EPiC Series in
  Computing}}, \bibinfo{publisher}{EasyChair}, \bibinfo{year}{2010}, pp.
  \bibinfo{pages}{2--14}.
\bibitem[{Laitinen et~al.(2011)Laitinen, Junttila, and
  Niemel{\"{a}}}]{LaitinenJN11}
\bibinfo{author}{T.~Laitinen}, \bibinfo{author}{T.~A. Junttila},
  \bibinfo{author}{I.~Niemel{\"{a}}},
\newblock \bibinfo{title}{Equivalence class based parity reasoning with
  {DPLL(XOR)}},
\newblock in: \bibinfo{booktitle}{{IEEE} 23rd International Conference on Tools
  with Artificial Intelligence, {ICTAI} 2011, Boca Raton, FL, USA, November
  7-9, 2011}, \bibinfo{publisher}{{IEEE} Computer Society},
  \bibinfo{year}{2011}, pp. \bibinfo{pages}{649--658}.
  \DOIprefix\doi{10.1109/ICTAI.2011.103}.
\bibitem[{Laitinen et~al.(2012)Laitinen, Junttila, and
  Niemel{\"{a}}}]{LaitinenJN12ictai12}
\bibinfo{author}{T.~Laitinen}, \bibinfo{author}{T.~A. Junttila},
  \bibinfo{author}{I.~Niemel{\"{a}}},
\newblock \bibinfo{title}{Extending clause learning {SAT} solvers with complete
  parity reasoning},
\newblock in: \bibinfo{booktitle}{{IEEE} 24th International Conference on Tools
  with Artificial Intelligence, {ICTAI} 2012, Athens, Greece, November 7-9,
  2012}, \bibinfo{publisher}{{IEEE} Computer Society}, \bibinfo{year}{2012},
  pp. \bibinfo{pages}{65--72}. \DOIprefix\doi{10.1109/ICTAI.2012.18}.
\bibitem[{Soos and Meel(2019)}]{SoosM19}
\bibinfo{author}{M.~Soos}, \bibinfo{author}{K.~S. Meel},
\newblock \bibinfo{title}{{BIRD:} engineering an efficient {CNF-XOR} {SAT}
  solver and its applications to approximate model counting},
\newblock in: \bibinfo{booktitle}{The Thirty-Third {AAAI} Conference on
  Artificial Intelligence, {AAAI} 2019, The Thirty-First Innovative
  Applications of Artificial Intelligence Conference, {IAAI} 2019, The Ninth
  {AAAI} Symposium on Educational Advances in Artificial Intelligence, {EAAI}
  2019, Honolulu, Hawaii, USA, January 27 - February 1, 2019},
  \bibinfo{publisher}{{AAAI} Press}, \bibinfo{year}{2019}, pp.
  \bibinfo{pages}{1592--1599}. \DOIprefix\doi{10.1609/aaai.v33i01.33011592}.
\bibitem[{Trevisan et~al.(2000)Trevisan, Sorkin, Sudan, and
  Williamson}]{gadgets}
\bibinfo{author}{L.~Trevisan}, \bibinfo{author}{G.~B. Sorkin},
  \bibinfo{author}{M.~Sudan}, \bibinfo{author}{D.~P. Williamson},
\newblock \bibinfo{title}{Gadgets, approximation, and linear programming},
\newblock \bibinfo{journal}{{SIAM} J. Comput.} \bibinfo{volume}{29}
  (\bibinfo{year}{2000}) \bibinfo{pages}{2074--2097}.
\bibitem[{Ans{\'{o}}tegui and Levy(2021)}]{AnsoteguiL21}
\bibinfo{author}{C.~Ans{\'{o}}tegui}, \bibinfo{author}{J.~Levy},
\newblock \bibinfo{title}{Reducing {SAT} to {Max2SAT}},
\newblock in: \bibinfo{editor}{Z.~Zhou} (Ed.), \bibinfo{booktitle}{Proceedings
  of the Thirtieth International Joint Conference on Artificial Intelligence,
  {IJCAI} 2021, Virtual Event / Montreal, Canada, 19-27 August 2021},
  \bibinfo{publisher}{ijcai.org}, \bibinfo{year}{2021}, pp.
  \bibinfo{pages}{1367--1373}.
\bibitem[{Bonet et~al.(2007)Bonet, Levy, and Many{\`{a}}}]{AIJ1}
\bibinfo{author}{M.~L. Bonet}, \bibinfo{author}{J.~Levy},
  \bibinfo{author}{F.~Many{\`{a}}},
\newblock \bibinfo{title}{Resolution for {Max-SAT}},
\newblock \bibinfo{journal}{Artif. Intell.} \bibinfo{volume}{171}
  (\bibinfo{year}{2007}) \bibinfo{pages}{606--618}.
  \DOIprefix\doi{10.1016/j.artint.2007.03.001}.
\bibitem[{Santra et~al.(2014)Santra, Quiroz, Steeg, and Lidar}]{quantumMax2SAT}
\bibinfo{author}{S.~Santra}, \bibinfo{author}{G.~Quiroz},
  \bibinfo{author}{G.~V. Steeg}, \bibinfo{author}{D.~A. Lidar},
\newblock \bibinfo{title}{{Max 2-SAT} with up to 108 qubits},
\newblock \bibinfo{journal}{New Journal of Physics} \bibinfo{volume}{16}
  (\bibinfo{year}{2014}) \bibinfo{pages}{045006}.
  \DOIprefix\doi{10.1088/1367-2630/16/4/045006}.
\bibitem[{Chancellor et~al.(2016)Chancellor, Zohren, Warburton, Benjamin, and
  Roberts}]{Chancellor}
\bibinfo{author}{N.~Chancellor}, \bibinfo{author}{S.~Zohren},
  \bibinfo{author}{P.~A. Warburton}, \bibinfo{author}{S.~C. Benjamin},
  \bibinfo{author}{S.~Roberts},
\newblock \bibinfo{title}{A direct mapping of max k-sat and high order parity
  checks to a chimera graph},
\newblock \bibinfo{journal}{Scientific Reports} \bibinfo{volume}{6}
  (\bibinfo{year}{2016}). \URLprefix \url{http://dx.doi.org/10.1038/srep37107}.
  \DOIprefix\doi{10.1038/srep37107}.
\bibitem[{N{\"{u}}{\ss}lein et~al.(2023)N{\"{u}}{\ss}lein, Zielinski, Gabor,
  Linnhoff{-}Popien, and Feld}]{Nusslein}
\bibinfo{author}{J.~N{\"{u}}{\ss}lein}, \bibinfo{author}{S.~Zielinski},
  \bibinfo{author}{T.~Gabor}, \bibinfo{author}{C.~Linnhoff{-}Popien},
  \bibinfo{author}{S.~Feld},
\newblock \bibinfo{title}{Solving (max) 3-sat via quadratic unconstrained
  binary optimization},
\newblock in: \bibinfo{editor}{J.~Mikyska}, \bibinfo{editor}{C.~de~Mulatier},
  \bibinfo{editor}{M.~Paszynski}, \bibinfo{editor}{V.~V. Krzhizhanovskaya},
  \bibinfo{editor}{J.~J. Dongarra}, \bibinfo{editor}{P.~M.~A. Sloot} (Eds.),
  \bibinfo{booktitle}{Computational Science - {ICCS} 2023 - 23rd International
  Conference, Prague, Czech Republic, July 3-5, 2023, Proceedings, Part {V}},
  volume \bibinfo{volume}{14077} of \textit{\bibinfo{series}{Lecture Notes in
  Computer Science}}, \bibinfo{publisher}{Springer}, \bibinfo{year}{2023}, pp.
  \bibinfo{pages}{34--47}. \URLprefix
  \url{https://doi.org/10.1007/978-3-031-36030-5\_3}.
  \DOIprefix\doi{10.1007/978-3-031-36030-5\_3}.
\bibitem[{Choi(2010)}]{Choi}
\bibinfo{author}{V.~Choi}, \bibinfo{title}{Adiabatic quantum algorithms for the
  {NP}-complete maximum-weight independent set, exact cover and {3SAT}
  problems}, \bibinfo{year}{2010}. \href{http://arxiv.org/abs/1004.2226}{{\tt
  arXiv:1004.2226}}.
\bibitem[{Bian et~al.(2017)Bian, Chudak, Macready, Roy, Sebastiani, and
  Varotti}]{quantumSebastianiConference}
\bibinfo{author}{Z.~Bian}, \bibinfo{author}{F.~A. Chudak},
  \bibinfo{author}{W.~G. Macready}, \bibinfo{author}{A.~Roy},
  \bibinfo{author}{R.~Sebastiani}, \bibinfo{author}{S.~Varotti},
\newblock \bibinfo{title}{Solving {SAT} and {MaxSAT} with a quantum annealer:
  Foundations and a preliminary report},
\newblock in: \bibinfo{editor}{C.~Dixon}, \bibinfo{editor}{M.~Finger} (Eds.),
  \bibinfo{booktitle}{Frontiers of Combining Systems - 11th International
  Symposium, FroCoS 2017, Bras{\'{\i}}lia, Brazil, September 27-29, 2017,
  Proceedings}, volume \bibinfo{volume}{10483} of
  \textit{\bibinfo{series}{Lecture Notes in Computer Science}},
  \bibinfo{publisher}{Springer}, \bibinfo{year}{2017}, pp.
  \bibinfo{pages}{153--171}. \DOIprefix\doi{10.1007/978-3-319-66167-4\_9}.
\bibitem[{Bian et~al.(2020)Bian, Chudak, Macready, Roy, Sebastiani, and
  Varotti}]{quantumSebastianiJournal}
\bibinfo{author}{Z.~Bian}, \bibinfo{author}{F.~A. Chudak},
  \bibinfo{author}{W.~G. Macready}, \bibinfo{author}{A.~Roy},
  \bibinfo{author}{R.~Sebastiani}, \bibinfo{author}{S.~Varotti},
\newblock \bibinfo{title}{Solving {SAT} (and {MaxSAT}) with a quantum annealer:
  Foundations, encodings, and preliminary results},
\newblock \bibinfo{journal}{Inf. Comput.} \bibinfo{volume}{275}
  (\bibinfo{year}{2020}) \bibinfo{pages}{104609}.
  \DOIprefix\doi{10.1016/j.ic.2020.104609}.
\bibitem[{Douglass et~al.(2015)Douglass, King, and Raymond}]{SATfilters}
\bibinfo{author}{A.~Douglass}, \bibinfo{author}{A.~D. King},
  \bibinfo{author}{J.~Raymond},
\newblock \bibinfo{title}{Constructing {SAT} filters with a quantum annealer},
\newblock in: \bibinfo{editor}{M.~Heule}, \bibinfo{editor}{S.~Weaver} (Eds.),
  \bibinfo{booktitle}{Theory and Applications of Satisfiability Testing -- SAT
  2015}, \bibinfo{publisher}{Springer}, \bibinfo{year}{2015}, pp.
  \bibinfo{pages}{104--120}.
\bibitem[{Choi(2010)}]{minimalEmbedding}
\bibinfo{author}{V.~Choi},
\newblock \bibinfo{title}{Minor-embedding in adiabatic quantum computation:
  {II}. minor-universal graph design},
\newblock \bibinfo{journal}{Quantum Information Processing}
  \bibinfo{volume}{10} (\bibinfo{year}{2010}) \bibinfo{pages}{343--353}.
  \DOIprefix\doi{10.1007/s11128-010-0200-3}.
\bibitem[{Bian et~al.(2014)Bian, Chudak, Israel, Lackey, Macready, and
  Roy}]{discreteOptimization}
\bibinfo{author}{Z.~Bian}, \bibinfo{author}{F.~Chudak},
  \bibinfo{author}{R.~Israel}, \bibinfo{author}{B.~Lackey},
  \bibinfo{author}{W.~G. Macready}, \bibinfo{author}{A.~Roy},
\newblock \bibinfo{title}{Discrete optimization using quantum annealing on
  sparse ising models},
\newblock \bibinfo{journal}{Frontiers in Physics} \bibinfo{volume}{2}
  (\bibinfo{year}{2014}). \DOIprefix\doi{10.3389/fphy.2014.00056}.
\bibitem[{Rodriguez et~al.(2024)Rodriguez, Ballester, Levy, Ansotegui, and
  Cerquides}]{cerquides}
\bibinfo{author}{P.~Rodriguez}, \bibinfo{author}{R.~Ballester},
  \bibinfo{author}{J.~Levy}, \bibinfo{author}{C.~Ansotegui},
  \bibinfo{author}{J.~Cerquides},
\newblock \bibinfo{title}{Implementing 3-{SAT} gadgets for quantum annealers
  with random instances},
\newblock in: \bibinfo{booktitle}{24th International Conference on
  Computational Science, ICCS 2024}, \bibinfo{year}{2024}.

\end{thebibliography}





\end{document}